\newtheorem{prop}{Proposition}
\begin{document}

\title{Efficient Trajectory Design and Communication Scheduling for Dual-UAV Jamming-Aided Secure Communication Networks}
 \author[$\dag$]{Xinran Wang, Peng Wu, Xiaopeng Yuan, Yulin Hu$^*$, and Anke Schmeink
 \thanks{X. Wang, P. Wu, and Y. Hu are with School of Electronic Information, Wuhan University, 430072 Wuhan, China. 
(Email: $xinran.wang|peng.wu|yulin.hu$@whu.edu.cn) $^*$Y. Hu is the corresponding author.
 }
 \thanks{X. Yuan and A. Schmeink are with Chair of Information Theory and Data Analytics, RWTH Aachen University, 52074 Aachen, Germany. (Email: $yuan|schmeink$@inda.rwth-aachen.de)}
  }

\maketitle
\begin{abstract}
\hyphenpenalty=8000
\tolerance=1800 
We study dual-unmanned aerial vehicle (UAV) jamming-aided secure communication networks, in which one UAV delivers confidential data to multiple ground users (GUs), while a cooperative UAV provides protective interference against a ground eavesdropper. To enforce fairness, we maximize the minimum secrecy throughput across GUs by jointly designing trajectories and communication scheduling. The key difficulty lies in the continuous-time nature of UAV trajectories and the tight space-time coupling between the transmitter and the jammer, which jointly render the problem infinite-dimensional and nonconvex. To address these challenges, we characterize, for the first time, the structure of the optimal trajectories and rigorously prove that they follow a collaborative successive hover-and-fly (co-SHF) structure, where the two UAVs visit a limited number of synchronized co-hovering point pairs, and during each flight segment at least one UAV moves at maximum speed. Leveraging this structure, we reformulate the problem into a finite-dimensional form, without loss of optimality, over hovering and turning points, hovering durations, and scheduling. For tractability, we adopt a minimum-distance approximation of continuous anti-collision constraints and employ concave lower bounds on secrecy throughput within a successive convex approximation (SCA) method, which converges and, thanks to the co-SHF reduction in optimization variables and constraints, achieves low computational complexity. Numerical results show that, compared with time-discretization and no-jamming benchmarks, the proposed co-SHF design improves the min-secrecy and user fairness while requiring significantly less runtime.

\end{abstract}
\begin{IEEEkeywords}
Unmanned aerial vehicle (UAV), secure communication networks, cooperative jamming, collaborative successive-hover-fly (co-SHF), coordinated trajectory design.
\end{IEEEkeywords}

\section{Introduction}
 
As the demand for more adaptive, intelligent, and secure communication infrastructures grows in future wireless networks such as 6G, unmanned aerial vehicle (UAV)-assisted communication is expected to serve as a flexible and intelligent component in next-generation system design \cite{Banafaa2024,Chen2023,Geraci2022}. 
With their high mobility, flexible deployment, and line-of-sight (LoS)-dominated air-to-ground (A2G) channels, UAVs have been increasingly integrated into wireless networks as aerial base stations, aerial relays, and aerial access points \cite{Xiao2022,Xu2020_1}. 
In particular, by appropriately designing UAV trajectories, possibly jointly with resource allocation, the spatial-temporal degrees of freedom offered by UAV mobility can be exploited to significantly improve network performance metrics such as throughput, coverage, and latency \cite{Pervez2024,Yan2024,Chang2023}. These features have driven extensive interest in UAV trajectory design for communication networks.

While the LoS-dominated nature of A2G channels facilitates wireless transmission by reducing path loss and signal blockage, it simultaneously gives rise to inherent security vulnerabilities due to the broadcast characteristics of wireless propagation. As a result, UAV-assisted communications are particularly susceptible to unauthorized interception and surveillance, posing serious threats to information confidentiality \cite{Wang2019,Wang2022,Sun2019}. Ensuring secure communication is therefore a fundamental requirement for UAV-enabled wireless networks. Considering the stringent energy constraints and high mobility of UAV platforms, physical-layer security (PLS) has emerged as an attractive solution for enhancing communication security without relying on additional cryptographic overhead \cite{Xiao2024,Mozaffari2021,Hassija2021}. In UAV-enabled networks, the high controllability of UAV motion further enables trajectory design and resource allocation to be exploited as effective means for improving secrecy performance \cite{Zhou2020,Na2022,Wang2024,Huang2025}. 
To further strengthen security, artificial noise (AN)-based cooperative jamming has been widely investigated as a mechanism to deliberately degrade the eavesdropper’s reception while preserving the quality of legitimate links. By embedding jamming signals into the system design and coordinating transmission and interference in space and time, jamming-aided approaches provide additional degrees of freedom for secure communication enhancement in UAV-assisted networks.

To further enhance communication security in UAV-assisted networks, cooperative architectures with two coordinated UAVs that play heterogeneous roles, where one UAV acts as a transmitter and the other serves as a mobile jammer, have recently attracted increasing attention \cite{Zhong2019}. By separating transmission and jamming functionalities across two aerial platforms, dual-UAV cooperative jamming offers enhanced spatial flexibility, which can substantially improve secrecy performance.
Existing studies on dual-UAV secure communications have considered various objectives and system settings, including robust designs under multiple eavesdroppers, propulsion-aware operation, and joint trajectory and resource optimization under practical constraints \cite{Miao2020,Cai2020,Lei2023}. 

Despite the notable secrecy gains achieved by existing works through joint trajectory and resource optimization \cite{Zhou2020,Na2022,Miao2020,Cai2020,Lei2023}, most of them handle continuous-time trajectory design via \textit{time discretization} (TD), where the mission duration is divided into a number of short time slots and the UAV position is approximated as constant within each slot \cite{Guo2021,Zhang2019}. To approach the optimal continuous-time solution, TD-based methods require increasingly fine discretization, which substantially enlarges the number of optimization variables and constraints and leads to high computational complexity. On the other hand, coarse discretization can significantly reduce complexity but inevitably introduces discretization errors, resulting in suboptimal solutions. This inherent accuracy-complexity tradeoff becomes particularly severe in multi-UAV systems. As the number of UAVs increases, the optimization variables and coupling constraints scale in a multiplicative manner across time slots and vehicles, causing the computational burden of TD-based formulations to grow rapidly with both the trajectory resolution and the network size. Such poor scalability motivates structure-exploiting continuous-time formulations that avoid dense discretization while preserving optimality.

In dual-UAV cooperative jamming systems, identifying an exploitable optimal coordination structure is particularly challenging yet crucial for scalable design. Unlike single-UAV settings, secrecy performance depends jointly on the transmitter and jammer positions, and the two role-asymmetric UAVs must coordinate their motion in both space and time while continuously satisfying collision-avoidance constraints and jointly designing resource allocation. As a result, the optimal coordinated trajectories cannot be characterized by a straightforward scaling-up of existing TD-based formulations, nor by a direct reuse of single-UAV structural results. 
Existing structure-based and analytical continuous-time trajectory design approaches for single-UAV systems, such as the successive hover-and-fly (SHF) structure \cite{Yuan2021,Yuan2023,Wang2024} and artificial potential field (APF)-based methods \cite{Huang2025}, provide useful insights into avoiding dense time discretization. However, they do not address the above dual-UAV coordination requirements, such as when and where two role-asymmetric UAVs should synchronize their hovering, or how their flying speeds should be jointly coordinated between hovering phases. To the best of our knowledge, a rigorous characterization of the optimal coordinated trajectories for dual-UAV transmitter-jammer systems, together with an efficient structure-based design framework, remains unavailable. 

\begin{table*}[t]
\centering
\caption{Comparison with Existing Works.}
\label{tab:comparison}
\renewcommand{\arraystretch}{1.2}
\begin{tabular}{c|c|c|c|c}
\hline
\textbf{Work} 
& \textbf{UAV Settings} 
& \textbf{System Roles}
& \textbf{Trajectory Formulation$^{*}$} 
& \textbf{Optimization Objective} \\
\hline
\cite{Zhou2020} 
& Single UAV
& Transmitter
& TD
& Max-min secrecy rate  \\

\cite{Na2022} 
& Single UAV
& Relay
& TD
& Max-min secrecy rate \\

\cite{Wang2024} 
& Single UAV 
& Transmitter
& SHF
& Max-min secrecy throughput \\

\cite{Huang2025} 
& Single UAV 
& Transmitter
& APF
& Throughput$^{\dagger}$ \\
\hline
\cite{Miao2020} 
& Dual UAV 
& Relay \& Jammer
& TD
& Average secrecy rate \\

\cite{Cai2020} 
& Dual UAV 
& Transmitter \& Jammer
& TD
& Secrecy energy efficiency$^{\ddagger}$ \\

\cite{Lei2023} 
& Dual UAV
& Transmitter \& Jammer
& TD
& Max-min secrecy rate \\
\hline\hline
\textbf{This work} 
& \textbf{Dual UAV} 
& \textbf{Transmitter \& Jammer}
& \textbf{Co-SHF} 
& \textbf{Max-min secrecy throughput} \\
\hline
\end{tabular}

\vspace{0.8mm}
{\raggedright\footnotesize
\textit{Note:} 
$^{*}$ TD denotes time discretization. SHF/APF/Co-SHF are continuous-time trajectory formulations. 
$^{\dagger}$ Single-user covert communication: maximizing throughput under a low-probability-of-detection (LPD) constraint. 
$^{\ddagger}$ Predetermined jammer trajectory for tractability.\par}
\end{table*}

In this work, we study a dual-UAV transmitter-jammer system for multi-user secure communication, where the transmitter UAV serves multiple ground users and the jammer UAV cooperatively interferes with a ground eavesdropper. 
We aim to maximize the minimum secrecy throughput among users by jointly optimizing the continuous-time trajectories of both UAVs subject to mobility constraints, collision avoidance, and binary user scheduling. 
From a networked coordination perspective, the key challenge lies in the strong spatiotemporal coupling between heterogeneous UAV roles and user fairness, which makes the resulting continuous-time design problem highly nontrivial. 
To address this challenge, we reveal an exploitable coordination structure of the optimal dual-UAV motion in continuous time and leverage it to obtain a scalable solution without relying on dense time discretization.
The main contributions are summarized as follows: 
\begin{itemize}

\item \textbf{Characterization of optimal hovering coordination in dual-UAV networks:}
We rigorously characterize the optimal hovering behavior in a coupled dual-UAV system with heterogeneous roles and a common secrecy objective. Specifically, we prove that in an optimal solution the two UAVs must perform synchronized hovering at a finite set of co-hovering point pairs, whose total number is upper bounded by the number of ground users. This result reveals a fundamental coordination structure in which transmission and jamming are aligned in both time and space, and rules out suboptimal patterns where one UAV hovers while the other continues to fly.

\item \textbf{Characterization of optimal flying coordination between co-hovering locations:} We further characterize how the two UAVs coordinate their motion between co-hovering locations. We show that there exists an equivalent optimal solution in which, during each flight segment, at least one UAV travels at its maximum speed, while the other adjusts its speed to maintain coordination. This property specifies how the two UAVs should coordinate their motion during flying phases and eliminates inefficient wandering between hovering phases. By combining the hovering and flying characterizations above, we \textit{for the first time} characterize the optimal trajectory structure for dual-UAV coordination, referred to as the \textbf{collaborative SHF structure}.

\item \textbf{Collaborative SHF-based reformulation and efficient algorithm design:} 
Leveraging the proposed co-SHF structure, we reformulate the original continuous-time problem into an equivalent finite-dimensional optimization problem with a substantially reduced number of variables and constraints. Within this reformulation, continuous-time collision avoidance along each flight segment is enforced by a small set of minimum-distance inequalities. Building on this co-SHF-based representation, we develop an efficient iterative algorithm based on successive convex approximation, which converges to a stationary solution and scales favorably with the network size.

\item \textbf{Performance evaluation via numerical simulations:} 
Numerical results demonstrate that the proposed co-SHF-based design achieves secrecy performance comparable to, or even better than, high-precision TD-based schemes, while requiring only a small fraction of their runtime. Compared with no-jamming benchmarks, the proposed design consistently improves the minimum secrecy throughput, and its performance gains remain stable across different UAV speeds and jamming power levels, highlighting the benefits of structured dual-UAV coordination. 
\end{itemize}

The remainder of this paper is organized as follows. After reviewing related work in Section~\ref{sec:related work}, Section~\ref{sec:system model} presents the system model and formulates the optimization problem. Section~\ref{sec:characterization} characterizes the optimal coordinated trajectories for the dual-UAV transmitter-jammer system. Based on this characterization, Section~\ref{sec:problem reformulation} derives an equivalent finite-dimensional reformulation. Section~\ref{sec:iterative} develops an iterative algorithm based on successive convex approximation to solve the reformulated problem. Numerical results are provided in Section~\ref{sec:numerical result}, and Section~\ref{sec:conclusion} concludes the paper.

\section{Related Work}
\label{sec:related work}
This section reviews related studies on UAV-enabled secure communication, with emphasis on systems that exploit UAV mobility to enhance physical-layer security.
We first summarize UAV-enabled secure communication designs, and then focus on dual-UAV jamming-aided networks that employ coordinated aerial platforms.

\subsection{UAV-enabled Secure Communication}

UAV-enabled secure communication has been extensively investigated in single-UAV scenarios, where PLS is enhanced by exploiting the high controllability of UAV mobility.
A central research direction focuses on jointly optimizing the UAV trajectory and communication resources to improve secrecy performance under A2G channels.
Representative works study secrecy rate or secrecy throughput maximization by adapting the UAV flight path together with transmit power and communication scheduling under various network settings.
For instance, intelligent reflecting surface (IRS)-assisted UAV secure communication systems are considered in \cite{Pang2022,Li2021,Fang2021,Niu2022,Wang2023}, where the UAV trajectory is jointly designed with passive beamforming to enhance legitimate links while suppressing information leakage.
In addition, short-packet transmission and finite-blocklength effects are incorporated into single-UAV secure designs in \cite{Chen2023_2,Wang2020_3}, revealing the impact of latency and reliability constraints on secrecy-oriented trajectory optimization.
Beyond single-user secrecy enhancement, multi-user secure communication has also been studied in single-UAV networks by accounting for user scheduling, association, and fairness considerations. Several works investigate secrecy performance in multi-user scenarios by jointly optimizing the UAV trajectory and resource allocation, with objectives such as maximizing the minimum secrecy rate among users or balancing secrecy performance across the network \cite{Zhou2020,Na2022,Mao2023,Lu2024}.
These studies highlight the importance of incorporating fairness and scheduling decisions into UAV-enabled secure communication designs.
Furthermore, the single-UAV PLS framework has been extended to accommodate more sophisticated system models and constraints.
Examples include secure UAV communications with energy efficiency considerations \cite{Mao2023}, UAV-assisted relaying under secrecy constraints \cite{Lu2024,Tang2022,Niu2022}, non-orthogonal multiple access (NOMA)-based secure transmission \cite{Chen2020}, and fixed-wing or three-dimensional trajectory models \cite{Xiong2023}.
Collectively, these works demonstrate the versatility of single-UAV platforms for secure wireless communication across diverse network architectures.

To further enhance communication security, AN transmission and jamming-aided approaches have been incorporated into single-UAV secure communication designs to intentionally degrade eavesdroppers while preserving the quality of legitimate links. In this context, a variety of jamming-aided UAV-enabled secure communication networks have been investigated by jointly designing the UAV trajectory and interference-related parameters under different system settings  \cite{Ding2024,Li2023,Wang2020,Lu2022,Lu2022_2,Chen2023_3,Nnamani2020,Jia2024}.
For example, friendly jamming from ground nodes is considered in \cite{Wang2020_2}, where the UAV trajectory and transmission strategy are optimized to improve secrecy performance against passive eavesdroppers.
In \cite{Zhou2020_2}, jamming-aided secure communication is studied for uplink UAV-enabled networks, highlighting the impact of interference coordination on secrecy enhancement.
The integration of AN or jamming with more advanced system architectures has also attracted attention, such as IRS-assisted jamming-aided secure UAV communications \cite{Wen2024} and mobile edge computing (MEC)-enabled secure transmission with interference management \cite{Xu2021_2}.
In addition, friendly jamming generated by aerial platforms has been explored as a means to enhance security in single-UAV communication systems \cite{Zhou2022,Liu2024}.
These studies demonstrate that jamming and AN constitute effective security enhancement tools within single-UAV networks and naturally motivate architectures that explicitly separate information transmission and interference generation across dedicated aerial platforms, which are reviewed in the next subsection.

\subsection{Dual-UAV Jamming-aided Secure Communication}

Building upon jamming-aided secure network designs within single-UAV systems, dual-UAV transmitter-jammer architectures have been widely investigated to further enhance PLS by functionally separating information transmission and interference generation across two aerial platforms.
In such systems, one UAV serves as the information transmitter while the other acts as a dedicated jammer to intentionally degrade the eavesdropper’s reception, and the mobility of both UAVs provides additional spatial degrees of freedom for secrecy enhancement.
Representative works study dual-UAV jamming-aided secure communication by jointly optimizing the trajectories of the transmitter and jammer together with communication resource under various channel and network models \cite{Li2020,Li2021_2,Xu2021,Zhang2023,Kim2021,Kang2022,Diao2022,Ye2024}.
For example, \cite{Li2021_2} investigates joint resource, trajectory and AN-related designs under a dual-UAV setting. \cite{Xu2021,Zhang2023} further consider trajectory and resource allocation coupling to exploit the cooperation between the two aerial platforms.
In addition, the impact of friendly jamming UAVs on secure communications is examined in \cite{Kim2021}, while \cite{Kang2022} studies dual-UAV aided bi-directional communications, illustrating the broad applicability of the transmitter-jammer architectures.
Beyond the basic dual-UAV secrecy model, various performance objectives and system constraints have been considered to address more practical network requirements.
Robust dual-UAV jamming-aided designs have been developed to cope with imperfect channel state information, uncertain environments, or reliability requirements \cite{Wang2022_2,Ye2024,Pandey2025_2}.
Dual-UAV secure communication has also been investigated in spectrum-sharing or cognitive settings and relay-assisted architectures, where interference management and secrecy requirements are jointly accounted for \cite{Li2020,Wang2022_2}.
Moreover, security-oriented designs have been extended to incorporate additional system considerations, such as intrusion detection and energy-harvesting related mechanisms in UAV-enabled networks \cite{Vo2021}.

More recently, dual-UAV jamming-aided secure communication has been extended toward more networked and intelligent settings.
Some works generalize the transmitter-jammer architectures to multi-UAV or swarm-enabled secure networks, where multiple aerial platforms cooperate and the resulting coordination becomes more intricate \cite{Dang2022,Liu2023,Lei2025,Wang2021}.
Distributed and multi-role UAV coordination has also been explored from a network optimization perspective \cite{Zhong2024,Gao2024,Wen2025,Li2020_2}.
Furthermore, dual-UAV jamming has been applied to covert communication scenarios with LPD constraints, enriching the security paradigms supported by cooperative UAV networks \cite{Yang2023,Liu2024_2,Lei2024}.
Dual-UAV architectures have also been studied for secure IoT-oriented networks, including joint resource allocation and trajectory design \cite{Pandey2025,Pandey2025_2}.

Despite these advances, most existing works solve the problem by discretizing time and optimizing the UAV positions over a finite set of time slots. 
From a networked coordination perspective, a systematic characterization of the optimal coordinated continuous trajectory structure for dual-UAV transmitter-jammer systems, remains largely unexplored.

\begin{figure}[t]
	\centering
\includegraphics[width=8 cm,trim=10 25 10 10]{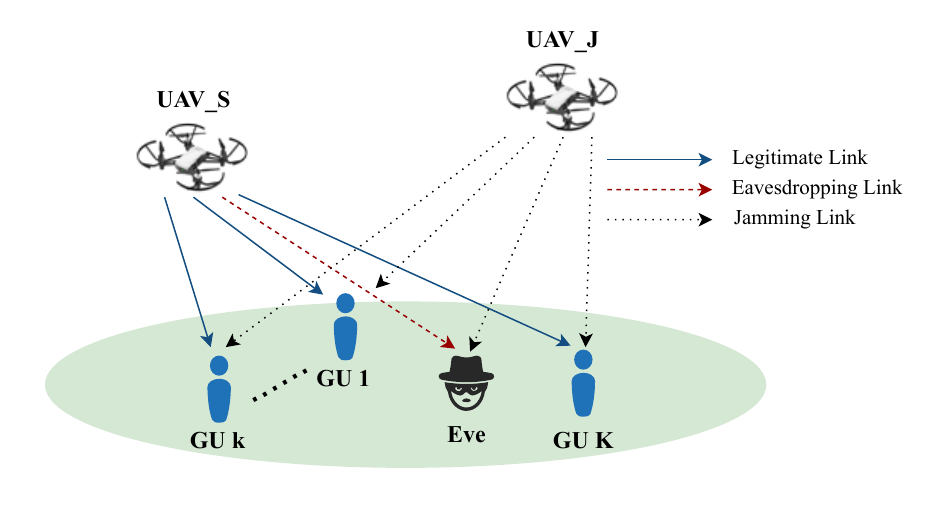}
\caption{An example of a dual-UAV-assisted wireless transmission scenario in the presence of an eavesdropper.\label{fig-sys}}
\end{figure} 

\section{System Model and Problem Formulation}
\label{sec:system model}

As depicted in Fig.~\ref{fig-sys}, we consider a UAV-assisted multi-user data transmission scenario where a UAV, referred to as UAV-S, is employed to provide communication service to $K$ ground users, while another UAV, designated as UAV-J, acts as a mobile jammer to emit interference to counter potential eavesdropping from Eve. 
Both UAVs are equipped with onboard sensing and RF localization modules that can leverage  the horizontal positions of the users and of the eavesdropper \cite{Bisio,Sallouha2025}. Without loss of generality, we adopt a 3D Cartesian coordinate system, where the horizontal locations of GU $k$ and Eve are denoted as ${{\bf{w}}_k} = ({w_{x,k}},{w_{y,k}}), \forall k \in {\cal K} \buildrel \Delta \over=\{ 1, \ldots, K\}$ and ${{\bf{w}}_e} = ({w_{x,e}},{w_{y,e}})$, respectively. In addition, both UAVs fly at a fixed altitude $H$ and their horizontal positions at any time $t \in [0, T]$ are denoted as ${\bf{q}}_S(t) = (x_S(t), y_S(t))$ and ${\bf{q}}_J(t) = (x_J(t), y_J(t))$, where $T$ is the total task time. In practice, the mobility of UAVs is subject to the following constraints
\begin{align}
    &{\bf q}_u(0)={\bf q}_{u,I},{\bf q}_u(T)={\bf q}_{u,F},u\in \{S,J\},\label{con:initial final point}\\
    &{\left\| {{\bf{\dot q}}_u(t)} \right\|^2} \le {V^2},u\in \{S,J\},\label{con:speed}    
\end{align}
where constraint \eqref{con:initial final point} requires UAVs to fly from fixed initial points ${\bf q}_{u,I}$ to fixed final points ${\bf q}_{u,F}$, and constraint \eqref{con:speed} represents that the flight speed of UAVs should be no more than the maximum allowable speed $V$. To avoid a collision between UAV-S and UAV-J, a minimum safety distance should be maintained between UAVs, leading to the following collision avoidance constraint
\begin{equation}
        {\left\| {{\bf q}_S(t)}-{{\bf q}_J(t)} \right\|^2} \ge {d^2_{min}}, \label{con:collision}
\end{equation}
where $d_{min}$ denotes the minimum distance between UAVs.

Considering the dominance of LoS links in rural areas with minimal blockage or in the urban macro scenario when the UAV flies above 100m \cite{3gpp,Muruganathan2021,Zeng2019}, we adopt the free-space path loss model to characterize the UAV-to-ground wireless channels.
Thus, the channel gain from UAVs to ground node $m\in\{k,e\}$ and Eve at time $t$ are respectively formulated as 
\begin{equation}
    g_{u,m}(t)=\frac{\beta_0}{d_{u,m}(t)^2}= \frac{{\beta _0}}{{{\left\| {{\bf{q}}_u(t) - {{\bf{w}}_m}} \right\|}^2} + {H^2}},
\end{equation}
where ${d_{u,m}}(t)$ represents the distance between UAV-$u$ and ground node $m\in\{k,e\}$, and $\beta_0$ refers to the channel power gain at the reference distance of unit meter.

Denote a set of binary variables ${a_k}(t) \in \{ 0,1\}$ as GU scheduling indicator. More specifically, GU $k$ can receive signals from UAVs at time $t$ if and only if $a_k(t)=1$, otherwise $a_k(t)=0$. In addition, UAVs are restricted to serve no more than one GU at any given time $t$. Thus, the communication scheduling constraints can be formulated as
\begin{align}
    &{a_k}(t) \in \{ 0,1\}, \forall k \in {{\cal K}},t \in [0,T],\label{con:scheduling1}\\
    &\sum\limits_{k = 1}^K {{a_k}(t) = 1}, \forall t \in [0,T].\label{con:scheduling2}
\end{align}

Accordingly, the transmitting rate of ground node $m\in\{k,e\}$ at time $t$ is formulated as 
\begin{equation}
    R_{S,m}(t)={\log _2}(1 + \frac{P_S{g_{S,m}}(t)}{P_J{g_{J,m}}(t)+\sigma_m^2}) ,
\end{equation}
where $P_S$ and $P_J$ denote the transmit power of UAV-S and UAV-J, and $\sigma_m^2$ is the additive white Gaussian noise (AWGN) power at ground node $m$. Thus, the secrecy rate of GU $k$ at time $t$ can be obtained as
\begin{equation}
        R_k({\bf{q}}_u(t))=\left[R_{S,k}(t)-R_{S,e}(t)\right]^+, 
\end{equation}
where ${\left[x\right]^ + } = \max \{ x,0\}$. 

Taking the fairness issue into account, our objective is to maximize the minimum secrecy throughput among GUs by jointly optimizing the trajectories of UAVs and communication scheduling while satisfying mobility and scheduling constraints. Consequently, the original problem can be formulated as
\begin{subequations}\label{OP}
    \begin{alignat}{2}
        ({\rm OP}):&\max \limits_{\{ {\bf{q}}_u(t),{a_k}(t)\} } &&\min \limits_{k \in {\cal K}} \int_0^T {a_k}(t)R_k({\bf{q}}_u(t))dt\\
       &\quad\quad \mathrm{s.t.:} &&\eqref{con:initial final point},\eqref{con:speed},\eqref{con:collision},\eqref{con:scheduling1},\eqref{con:scheduling2}. \nonumber
    \end{alignat}
\end{subequations}

Problem (OP) is difficult to solve optimally due to the following reasons. First, since UAV trajectories are continuous in both time and space, optimizing the trajectories ${\bf{q}}_u(t)$ and communication scheduling ${a_k}(t)$ involves an infinite number of variables, rendering (OP) intractable to solve. Second, the coupling between ${\bf{q}}_S(t)$ and ${\bf{q}}_J(t)$ poses further difficulties in handling the objective function $R_k({\bf{q}}_u(t))$. Moreover, the non-convex collision avoidance constraint \eqref{con:collision} and the integer constraint \eqref{con:scheduling1} add to the complexity of the problem. Therefore, how to find efficient solutions to (OP) remains challenging.

It should be noted that in prior work \cite{Miao2020,Cai2020}, researchers typically adopt a TD-based design to approximate continuous trajectories with a finite number of trajectory points. 
While this method offers a tractable dimension-reduction approach, it introduces substantial computational complexity, as the number of optimization variables grows rapidly with the discretization accuracy. Although decreasing the quantization accuracy may reduce computational complexity, it can also degrade trajectory resolution and lead to suboptimal solutions. To tackle these limitations, we characterize the optimal dual-UAV trajectory structure to solve (OP) efficiently.

\section{Characterization of Optimal Dual-UAV Coordinated Trajectories}
\label{sec:characterization}
In this section, we analyze the structural properties of the optimal trajectories in the dual-UAV cooperative jamming scenario defined by (OP). While the SHF structure has been proven optimal in single-UAV scenarios \cite{Yuan2021}, it is no longer applicable to the dual-UAV coordinated setting considered in (OP). 
In particular, the secrecy rate $R_k({\bf{q}}_u(t),{a_k}(t))$ depends jointly on the positions of both UAVs, leading to strong spatial coupling. Moreover, the optimal trajectory of each UAV is influenced by the movement of the other, creating time-dependent coordination requirements. Such spatial and temporal inter-dependencies prohibit the independent application of SHF to each UAV, necessitating a joint and coordinated trajectory design. 
To this end, we rigorously analyze the trajectory coupling in (OP) and develop a \textit{collaborative SHF} structure for dual-UAV coordination.

Specifically, we show that the optimal trajectories in this collaborative SHF structure, as shown in Fig.~\ref{fig-shf}, exhibit two fundamental properties: 
\begin{itemize}
    \item \textbf{Synchronized hovering with at most} $\boldsymbol{K}$ \textbf{co-hovering point pairs;}
    \item \textbf{Except during co-hovering, at least one UAV flies with maximum speed} $\boldsymbol{V}$.
\end{itemize}
We now rigorously prove the above two properties of the collaborative SHF structure. 
We begin by analyzing the coordination between the two UAVs during hovering and flight.  The following proposition shows that if hovering occurs, the UAVs must hover and fly simultaneously, and during flight, at least one UAV moves at the maximum speed.

\begin{figure}[t]
	\centering
\includegraphics[width=9 cm,trim=10 25 10 10]{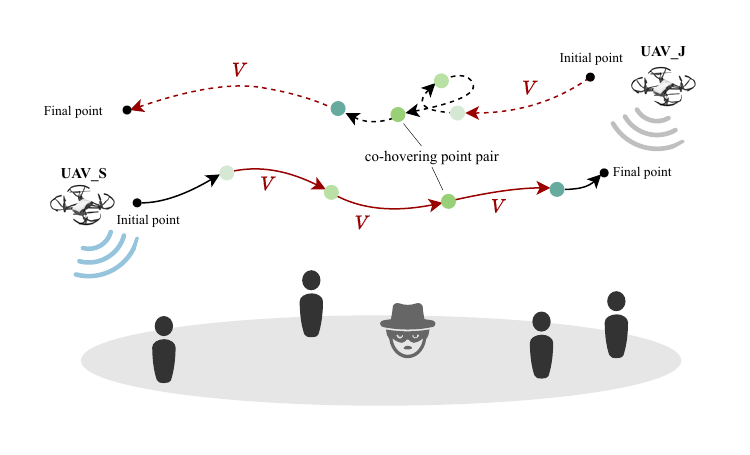}
\caption{Illustration of the collaborative SHF trajectory structure for dual-UAV cooperative jamming-assisted secure communication.\label{fig-shf}}
\end{figure} 


\begin{prop}
\label{prop1}
In the optimal solution $\{{\bf{q}}^*_u(t),{a^*_k}(t)\}$ to (OP), the two UAVs successively hover at a finite number of { synchronized co-hovering point pairs}, with at least one UAV traveling at the maximum speed $V$ during each flying segment between them.
\end{prop}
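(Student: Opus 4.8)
The plan is to establish the two claims by a pair of independent exchange (interchange) arguments, each of which perturbs a putative optimal solution while keeping the objective nondecreasing and feasibility intact. Throughout I would work with an optimal $\{{\bf q}^*_u(t),a^*_k(t)\}$ guaranteed to exist (by a standard compactness/continuity argument on the continuous trajectories), and I would repeatedly use the fact that the per-user integrand $a_k(t)R_k({\bf q}_u(t))$ is, for fixed scheduling, governed purely by the instantaneous relative geometry of UAV-S to GU~$k$ and of UAV-J to Eve --- so any time interval on which neither UAV's \emph{relevant} distance changes contributes the same rate regardless of \emph{how} the UAVs move within it.

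First I would prove the ``at least one UAV at speed $V$ during flight'' part. Suppose, on some flight sub-interval $[t_1,t_2]$, \emph{both} UAVs travel strictly below $V$. I would construct a modified pair of trajectories that traverses the same geometric paths but ``hurries through'' this interval: reparametrize time so that on a shortened window both UAVs reach their endpoints faster, then insert the freed-up time elsewhere (e.g.\ extend a hovering phase, or if no hovering phase exists yet, create one at the point most beneficial for the bottleneck user). Because the shortened traversal is feasible (speeds scale up but can be kept $\le V$ by choosing the shrink factor so the faster UAV just hits $V$), and because reallocating time to hovering at a well-chosen point can only increase the bottleneck user's accumulated secrecy rate (hovering at a locally optimal spot dominates moving through suboptimal ones), the objective does not decrease. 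Iterating removes all such ``slow-slow'' segments; a limiting/measure argument then yields an optimal solution in which a.e.\ during flight at least one UAV is at maximum speed.

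Next I would prove the synchronized-hovering part. The key sub-lemma is that if one UAV hovers on an interval while the other flies, we can do better (or no worse) by also stopping the flying UAV: more precisely, I would show that given the total time budget, it is optimal for both UAVs to spend their ``non-flying'' time at the \emph{same} instants, because the secrecy rate at any time is jointly determined by both positions and there is no benefit to having one UAV idle-in-motion (wandering) while the other is parked --- that motion can be replaced by a co-hover at the pair of points that maximizes the current bottleneck user's rate, with the wandering deferred into a maximum-speed flight segment as in the first part. This reduces any optimal solution to one consisting of alternating \emph{co-hovering} phases and flight phases. Finally, to bound the number of co-hovering point pairs by $K$, I would argue as in the single-UAV SHF results \cite{Yuan2021}: within a co-hovering phase only one GU $k$ is scheduled (by \eqref{con:scheduling2} we may take $a_k$ piecewise constant there without loss), and two co-hovering phases serving the \emph{same} user can be merged into one --- their contributions to that user's integral, and to the min-objective, depend only on total dwell time at each co-hovering pair, so consolidating at the better pair is weakly improving; hence at most $K$ distinct co-hovering point pairs survive.

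I expect the main obstacle to be the synchronized-hovering step, specifically making rigorous the claim that ``one UAV hovering while the other flies'' is never strictly necessary. Unlike the single-UAV case, the eavesdropper's rate $R_{S,e}(t)$ couples both trajectories: slowing UAV-J near Eve helps secrecy, but doing so may force UAV-S (via the speed and collision constraints \eqref{con:speed}, \eqref{con:collision}) into a suboptimal path. The delicate part is showing that any ostensibly beneficial ``asynchronous'' motion can be re-expressed, without loss of optimality, as a sequence of co-hovers plus maximum-speed flights --- this requires carefully handling the collision-avoidance constraint during the re-timing (ensuring $\|{\bf q}_S-{\bf q}_J\|\ge d_{min}$ is preserved when we compress/stretch the schedule) and arguing that the endpoint constraints \eqref{con:initial final point} can still be met. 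I would likely isolate this as an auxiliary lemma (a ``co-hover dominates asynchronous idling'' statement) and prove it by an explicit time-reallocation construction together with a convexity/monotonicity property of $R_k$ along the relevant line segments.
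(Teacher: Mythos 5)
Your proposal follows essentially the same route as the paper: a time-reallocation (exchange) argument in which, on any segment where both UAVs fly below $V$, the UAV with the longer path is accelerated to $V$, the other's speed is scaled proportionally to keep them synchronized, and the saved time is spent with \emph{both} UAVs parked at the point pair attaining the maximum instantaneous rate, contradicting optimality; the $\le K$ count you append is not part of this proposition but is handled in the paper's Proposition~2 by exactly the merging argument you describe. The obstacle you flag as the main difficulty (preserving collision avoidance under the re-timing) is resolved implicitly in the paper's construction because both UAVs stay on their original straight segments with identical normalized progress at every instant, so the pairing of positions --- and hence the inter-UAV distance profile in constraint \eqref{con:collision} --- is unchanged.
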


\begin{proof}
For contradiction, suppose that in the optimal trajectories $\{{\bf q}^*_u(t)\}$, there exists a sufficiently small time interval $[\tau_1, \tau_2]$ over which both UAVs fly at constant speeds strictly less than the maximum speed $V$, i.e., $0 < v_u < V$ for $u \in \{S, J\}$. The interval is short enough such that the UAV trajectories can be locally approximated as straight-line segments, with constant speeds and fixed communication scheduling. Specifically, we assume that only user $k_0$ is scheduled, i.e., $a_{k_0}(t) = 1$ for all $t \in [\tau_1, \tau_2]$. Let ${\bf q}_{u,1}$ and ${\bf q}_{u,2}$ denote the positions of UAV-$u$ at times $\tau_1$ and $\tau_2$, respectively. Then the position of UAV-$u$ at time $t \in [\tau_1, \tau_2]$ is given by  
${\bf q}_u(t) = {\bf q}_{u,1} + \frac{t - \tau_1}{\tau_2 - \tau_1}({\bf q}_{u,2} - {\bf q}_{u,1})$. Accordingly, the secrecy throughput of user $k_0$ in this interval is $\Delta U_{k_0} = \int_{\tau_1}^{\tau_2} R_{k_0}({\bf q}_S(t), {\bf q}_J(t))\,dt$.



We then establish that accelerating the UAV with the longer segment between ${\bf q}_{u,1}$ and ${\bf q}_{u,2}$ to speed $V$, and reallocating the resulting saved time to coordinated hovering yields a higher secrecy throughput, contradicting the assumed optimality.

\begin{figure*}[t!]
\begin{align}
\Delta U_{k_0} &= \frac{v_f}{V} \int_{\tau_1}^{\tau_2} R_{k_0}({\bf q}_S^*(t), {\bf q}_J^*(t))\, dt + \frac{V - v_f}{V} \int_{\tau_1}^{\tau_2} R_{k_0}({\bf q}_S^*(t), {\bf q}_J^*(t))\, dt \nonumber\\
&\hspace{-.65cm}\overset{v_ft = Vt_1}{\overset{v_ft = \frac{Vv_f}{V - v_f} t_2}{=}} \underbrace{\int_0^{\frac{\|{\bf q}_{f,2} - {\bf q}_{f,1}\|}{V}} R_{k_0}({\bf q}_S^*(t_1), {\bf q}_J^*(t_1))\, dt_1}_{\Delta \bar{U}_{k_0}} + \underbrace{\int_0^{\frac{\|{\bf q}_{f,2} - {\bf q}_{f,1}\|}{V - v_f}} R_{k_0}({\bf q}_S^*(t_2), {\bf q}_J^*(t_2))\, dt_2}_{\Delta \hat{U}_{k_0}}.\label{eqvi}
\end{align}
\hrulefill
\end{figure*}

To facilitate the analysis, we decompose $\Delta U_{k_0}$ into two components, $\Delta \bar{U}_{k_0}$ and $\Delta \hat{U}_{k_0}$, as shown in \eqref{eqvi}, corresponding to two decomposed flight segments based on the acceleration of one UAV to speed $V$. Let $f \in \{S, J\}$ denote the UAV with the longer segment between ${\bf q}_{u,1}$ and ${\bf q}_{u,2}$. In the case where both UAVs travel the same distance, either UAV can be designated as $f$ without loss of generality. The two terms in \eqref{eqvi} are interpreted as follows:

\begin{itemize}
\item \textbf{For $\Delta \bar{U}_{k_0}$}: UAV-$f$ travels from ${\bf q}_{f,1}$ to ${\bf q}_{f,2}$ at the maximum speed $V$, completing the segment in duration $\Delta \bar{\tau} = \frac{\|{\bf q}_{f,2} - {\bf q}_{f,1}\|}{V}$. The other UAV, denoted by $g \in \{S, J\} \setminus \{f\}$, follows a straight path from ${\bf q}_{g,1}$ to ${\bf q}_{g,2}$ at an adjusted speed of $\alpha V$, where $\alpha = \frac{v_g}{v_f}$ ensures synchronized arrival at their respective endpoints.

\item \textbf{For $\Delta \hat{U}_{k_0}$}: UAV-$f$ moves along the same segment at a constant speed of $\frac{Vv_f}{V - v_f}$, corresponding to a stretched duration of $\Delta \hat{\tau} = \frac{V - v_f}{Vv_f} \|{\bf q}_{f,2} - {\bf q}_{f,1}\|$. Meanwhile, UAV-$g$ travels linearly between ${\bf q}_{g,1}$ and ${\bf q}_{g,2}$ at speed $\frac{Vv_g}{V - v_f}$, preserving the same scaling ratio $\alpha$.
\end{itemize}

It is worth noting that in the special case where $v_f = v_g$, the scaling ratio becomes $\alpha = 1$. In this case, either UAV can be selected as $f$, and both decomposed integrals in \eqref{eqvi} remain well-defined: $\Delta \bar{U}_{k_0}$ corresponds to both UAVs flying at speed $V$, while $\Delta \hat{U}_{k_0}$ corresponds to both UAVs flying at an equal speed of $\frac{Vv_f}{V - v_f}$. The decomposition structure remains valid and comparable.
By decomposing $\Delta U_{k_0}$ in this way, we can interpret the original trajectory segment as being equivalent, in terms of throughput, to the sum of two traversals from ${\bf q}_{f,1}$ to ${\bf q}_{f,2}$: one where UAV-$f$ flies at the maximum speed $V$, and another at the unconstrained speed $\frac{Vv_f}{V - v_f}$. In both traversals, the other UAV-$g$ moves along its corresponding path segment at adjusted speeds $\alpha V$ and $\frac{Vv_g}{V - v_f}$, respectively, such that the two UAVs arrive at their endpoints simultaneously in each case. Note that the total duration of the two traversals remains equal to the original interval length, i.e., $\tau_2 - \tau_1$.

Let $R_{k_0}^{\text{max}}$ denote the maximum instantaneous secrecy rate attained along the optimal trajectories during the interval $[\tau_1, \tau_2]$, i.e., $R_{k_0}^{\text{max}} := \max_{t \in [\tau_1, \tau_2]} R_{k_0}({\bf q}_S(t), {\bf q}_J(t))$. Since the instantaneous secrecy rate $R_{k_0}(t)$ generally varies over time during the interval $[\tau_1, \tau_2]$, the integral throughput $\Delta \hat{U}_{k_0}$ is strictly less than the upper bound $\Delta \hat{\tau} \cdot R_{k_0}^{\text{max}}$, i.e., $\Delta \hat{U}_{k_0} < \Delta \hat{\tau} \cdot R_{k_0}^{\text{max}}$. 

We now construct a new trajectory by reallocating the surplus duration $\Delta \hat{\tau}$ to hovering. To preserve the temporal alignment and structural consistency established in the decomposition \eqref{eqvi}, the two UAVs are required to perform the hovering behavior in a synchronized manner. 
Specifically, both UAVs remain stationary at a pair of locations $({\bf q}_{S,k_0}^{\text{max}}, {\bf q}_{J,k_0}^{\text{max}})$ for duration $\Delta \hat{\tau}$, where the secrecy rate achieves $R_{k_0}^{\text{max}}$. 

By allowing both UAVs to hover at these positions for duration $\Delta \hat{\tau}$, the new throughput becomes
\begin{equation}
\Delta U_{k_0}' =\Delta \bar{U}_{k_0} + \underbrace{ R_{k_0}^{\text{max}}({\bf q}_{S,k_0}^{\text{max}}, {\bf q}_{J,k_0}^{\text{max}}){\Delta \hat{\tau}}}_{\Delta U'_{k_0}} ,
\end{equation}
Since hovering at the optimal positions yields a higher instantaneous throughput than moving along the original optimal paths, we have $\Delta U_{k_0}' > \Delta U_{k_0}=\Delta \bar{U}_{k_0} + \Delta {\hat{U}}_{k_0}$. 

Therefore, a new trajectory can be constructed within this region, where both UAVs first fly from ${\bf q}_{u,1}$ to a selected co-hovering point pair $({\bf q}_{S,k_0}^{\text{max}}, {\bf q}_{J,k_0}^{\text{max}})$ at adjusted speeds—UAV-$f$ at the maximum speed $V$ and UAV-$g$ at a scaled speed to ensure synchronized arrival. They then remain stationary at the co-hovering point pair for a duration of $\Delta \hat{\tau}$, followed by continuing to ${\bf q}_{u,2}$ along straight-line paths, again at appropriately adjusted speeds to reach the destination simultaneously. The mission time of this new trajectory remains identical to that of the original optimal trajectory ${\bf q}^*_u(t)$, while achieving a higher secrecy rate. Notably, the newly constructed trajectory preserves the topological continuity of the trajectory points and leaves the trajectory outside this region unaffected. Therefore, the new trajectory surpasses the so-called optimal trajectory in performance. This contradicts the assumption that the original trajectories are optimal.

\end{proof}

\vspace{-.2cm}
According to Proposition~\ref{prop1}, the UAVs successively hover at a finite number of co-hovering point pairs, and at least one UAV (the one covering the longer segment) travels at the maximum speed $V$ during each flight between these pairs. To further reduce the structural complexity and the number of optimization variables, we analyze the total number of such hovering pairs in Proposition~\ref{prop2}.

\begin{prop}
\label{prop2}
In (at least one equivalent) optimal solution $\{{\bf{q}}^*_u(t), {a^*_k}(t)\}$ to problem (OP), there are at most $K$ { co-hovering point pairs for UAVs}, where $K$ is the number of GUs. 
\end{prop}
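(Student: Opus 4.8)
The plan is to show that, given any optimal solution guaranteed by Proposition~\ref{prop1}, we can construct an equivalent optimal solution in which the number of co-hovering point pairs does not exceed $K$. By Proposition~\ref{prop1}, the optimal trajectories consist of finitely many co-hovering point pairs connected by straight flight segments. The key observation is that the min-secrecy objective $\min_k \int_0^T a_k(t) R_k({\bf q}_u(t))\,dt$ depends on the trajectories only through the accumulated secrecy throughput delivered to each user, and during a flight segment each UAV passes through each point only instantaneously, so (after the argument of Proposition~\ref{prop1}) essentially \emph{all} useful throughput is accumulated during hovering. Thus I would first argue that the total secrecy throughput collected at hovering pairs dominates, and that flight segments contribute negligibly (or can be folded into the analysis) — the effective decision variables are the co-hovering locations, the user scheduled at each, and the corresponding hovering durations.

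Next I would set up a merging/aggregation argument on the hovering pairs. Suppose there are $M > K$ co-hovering pairs. Since each hovering interval has a single scheduled user (by the scheduling constraint \eqref{con:scheduling2}, at most one $a_k(t)=1$ at any instant, and by continuity one can take scheduling constant on each hovering interval), and there are only $K$ users, by the pigeonhole principle at least two hovering pairs, say indices $i$ and $j$, serve the same user $k_0$. For a fixed scheduled user $k_0$, the instantaneous secrecy rate $R_{k_0}({\bf q}_S,{\bf q}_J)$ is a fixed function of the two UAV positions; among all co-hovering locations used for $k_0$, pick the pair $({\bf q}_{S,k_0}^\star,{\bf q}_{J,k_0}^\star)$ achieving the largest secrecy rate $R_{k_0}^\star$. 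I would then replace \emph{all} hovering pairs assigned to $k_0$ by a \emph{single} hovering pair at $({\bf q}_{S,k_0}^\star,{\bf q}_{J,k_0}^\star)$, with total hovering duration equal to the sum of the original durations assigned to $k_0$. This (weakly) increases the throughput delivered to $k_0$ while leaving the throughput of every other user unchanged, hence does not decrease the min-objective; combined with optimality this yields an equivalent optimal solution. Iterating over all users reduces the number of co-hovering pairs to at most $K$.

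The remaining technical work is to confirm that this re-routing is \emph{feasible}: the new trajectory must still satisfy the endpoint constraints \eqref{con:initial final point}, the speed constraint \eqref{con:speed}, the collision-avoidance constraint \eqref{con:collision}, and must fit within the mission time $T$. Feasibility of the endpoints and mission time follows because we conserve total time (the hovering durations are merely relocated/aggregated, and the flight segments connecting the reduced set of hovering pairs can be traversed in no more time than before by a re-parametrization at speed at most $V$ — here one invokes that consolidating detours only shortens total flight distance, by the triangle inequality along the polygonal path). The collision constraint holds at the new hovering pair because $({\bf q}_{S,k_0}^\star,{\bf q}_{J,k_0}^\star)$ was itself a feasible co-hovering pair in the original solution, hence already satisfies $\|{\bf q}_{S,k_0}^\star-{\bf q}_{J,k_0}^\star\|^2 \ge d_{min}^2$; along the new straight flight segments one argues, as in Proposition~\ref{prop1} and in the later minimum-distance relaxation, that the constraint can be maintained (or the later reformulation's min-distance inequalities absorb this).

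I expect the main obstacle to be the \emph{feasibility of the connecting flight segments} after merging — specifically, ensuring that rerouting both UAVs through a single consolidated co-hovering pair per user, in some consistent temporal order, does not violate the synchronized-arrival requirement, the per-UAV speed bound, or collision avoidance along the new segments, all while not exceeding $T$. The throughput-improvement half of the argument (pigeonhole plus picking the best location per user) is essentially immediate; the care lies in the geometric/temporal bookkeeping of the trajectory surgery, and in handling the edge case where flight segments themselves carry non-negligible scheduled throughput, which I would dispatch by appealing to the construction in the proof of Proposition~\ref{prop1} that converts such flight-time throughput into additional hovering at a rate-maximizing co-hovering pair.
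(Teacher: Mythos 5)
Your core idea is the same as the paper's: by pigeonhole, more than $K$ hovering pairs forces some user $k_0$ to be served at several of them; pick the pair with the largest secrecy rate for $k_0$ and concentrate all of $k_0$'s hovering time there, which weakly increases $k_0$'s throughput and hence the min-objective. Where you diverge is in the trajectory surgery: you propose to \emph{delete} the redundant co-hovering locations and reroute both UAVs through the single consolidated pair, and you correctly flag that this is where the difficulties live — but you do not close them, and they are real. Rerouting changes the flight path, so (i) collision avoidance along the new straight segments between the surviving waypoints is \emph{not} implied by feasibility of the original path (two UAVs that kept $d_{\min}$ apart while detouring through intermediate points may violate it when each flies straight), and (ii) any throughput that other users collected along the deleted detours changes, so your claim that every other user's throughput is ``unchanged'' fails in general; appealing to the later SCA-stage minimum-distance inequalities is not a proof of the structural claim, since those belong to the algorithm, not to (OP).

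The paper's proof avoids all of this with a simpler move: it does \emph{not} remove the redundant hovering locations from the trajectory. It keeps the geometry of both UAV paths exactly as in the optimal solution and merely sets the hovering \emph{durations} at the non-maximal-rate points to zero, reallocating that time to the best-rate point for $k_0$. The locations with zero dwell time become ordinary pass-through points, so the endpoint, speed, collision, total-time, and flight-segment-throughput bookkeeping are all trivially unchanged, and only the count of genuine co-hovering pairs drops to at most one per user. If you replace your rerouting step with this duration-reallocation step (and note, as the paper does, that when the rates at $k_0$'s hovering points are all equal the reallocation yields an \emph{equivalent} rather than strictly better solution, which is why the statement reads ``at least one equivalent optimal solution''), your argument becomes complete; as written, the feasibility of the connecting segments is a genuine gap.
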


\begin{proof}
We prove the proposition by contradiction. Assume that in the optimal solution $\{{\bf{q}}^*_u(t), {a^*_k}(t)\}$, there exists a GU $k$ assigned more than one hovering point, i.e., $N > 1$. Let these hovering points be associated with durations $(\tau_{1,k}, \tau_{2,k}, \ldots, \tau_{N,k})$ and corresponding transmission rates $(r_{1,k}, r_{2,k}, \ldots, r_{N,k})$.

For the trajectory to be optimal, the transmission rates at all hovering points should be consistent. If there is inconsistency, let $r_{k,\text{max}}$ denote the maximum transmission rate among these points. We can construct a new trajectory by redistributing the hovering times. Specifically, the total hovering time for GU $k$ can be defined as $\tau'_k = \frac{\sum_{i=1}^N r_{i,k} \tau_{i,k}}{r_{k,\text{max}}}$. This total time $\tau'_k$ can be allocated entirely to the hovering point with transmission rate $r_{k,\text{max}}$, while setting the hovering times at other points to zero. This adjustment maintains the total time cost of the trajectory but increases the throughput, as $r_{k,\text{max}} \cdot \tau'_k > \sum_{i=1}^N r_{i,k} \tau_{i,k}$. Thus, a new trajectory with better performance can be constructed, contradicting the optimality of the original trajectory. Since an equivalent trajectory with the same performance as the original can be constructed by assigning the total hovering time $\sum_{i=1}^N \tau_{i,k}$ to any one of the hovering points, it follows that the optimal solution must allocate at most one hovering point to each ground user. Therefore, in at least one equivalent optimal solution to problem (OP), there can be at most $K$ hovering points, where $K$ is the number of GUs.
\end{proof}

So far, we have characterized the structure of the optimal dual-UAV trajectories under the considered cooperative jamming scenario. The optimal trajectories follow a collaborative SHF structure comprising at most $K$ synchronized co-hovering point pairs. Leveraging this structure, we proceed to reformulate problem (OP) with limited optimization variables in the next section. 

\section{Problem Reformulation}
\label{sec:problem reformulation}

According to the trajectory structure characterized in Section~\ref{sec:characterization}, which follows the collaborative SHF model, the two UAVs successively visit at most $K$ synchronized co-hovering point pairs, each associated with a finite hovering duration. Let ${\bf q}_{u,i} = (x_{u,i}, y_{u,i})$ denote the locations of the $i$-th hovering point for UAV $u \in \{S, J\}$, and ${\bf t} = (t_1, \ldots, t_K)$ denote the corresponding hovering durations, where $t_i$ is the hovering duration spent at the $i$-th co-hovering pair.
In the context of 2D trajectory design, both UAVs may follow curved paths between adjacent co-hovering point pairs. To approximate such curves, we introduce $N$ turning points between adjacent hovering points for each UAV. Specifically, let ${\bf q}_{u,i,j} = (x_{u,i,j}, y_{u,i,j})$ denote the $j$-th turning point for UAV $u$ between its $i$-th and $(i+1)$-th hovering positions, where $j = 1, \ldots, N$ and $i = 1, \ldots, K$. 
Thus, the entire trajectory of UAV-$u$, for $u\in\{S,J\}$, including both hovering points and turning points in sequential order is expressed as 
\begin{equation}
        {\bf{Q}}_u \buildrel \Delta \over = ({{\bf{q}}_{u,0,0}},...,{{\bf{q}}_{u,i,j}},...,{{\bf{q}}_{u,K,N}},{{\bf{q}}_{u,K + 1,0}}),
\end{equation}
where ${{\bf{q}}_{u,0,0}}$ and ${{\bf{q}}_{u,K+1,0}}$ refer to the initial and final points of UAV-$u$, respectively. Hence, the constraint \eqref{con:initial final point} can be rewritten to 
\begin{equation}
\begin{aligned}
    {{\bf{q}}_{u,0,0}}= {{\bf{q}}_{u,I}},{{\bf{q}}_{u,K+1,0}} = {{\bf{q}}_{u,F}}.\label{con:initial final point2}
\end{aligned}    
\end{equation}

Next, we analyze the flight segments between hovering points along the UAV trajectories. For notational clarity, let $d_{i,j}({\bf Q}_u) = \| {\bf q}_{u,i,j+1} - {\bf q}_{u,i,j} \|$ denote the distance between two adjacent trajectory points of UAV $u$. The flight time of each segment $(i,j)$ is determined by the UAV with the longer travel distance, which flies at the maximum speed $V$, while the other UAV adjusts its speed proportionally to ensure simultaneous arrival. Accordingly, the flight duration is given by $\Delta t_{i,j} = \frac{\max\{ d_{i,j}({\bf Q}_S),\ d_{i,j}({\bf Q}_J) \}}{V}$, and the speed of UAV-$u$ is $\frac{d_{i,j}({\bf Q}_u)}{\Delta t_{i,j}}$. Let $\tau_{i,j} \in [0,\Delta t_{i,j}]$ denote the elapsed time since departing from the $j$-th turning point between the $i$-th and $(i{+}1)$-th hovering points. The position of UAV-$u$ at time $\tau_{i,j}$ along this segment is given by
\begin{equation}
\begin{aligned}
  {\bf{\bar q}}_{u,i,j}(\tau_{i,j}) \!=\! {\bf q}_{u,i,j} \!+\! \frac{({\bf q}_{u,i,j+1} \!-\! {\bf q}_{u,i,j}) \!\cdot\! \tau_{i,j} \!\cdot\! d_{i,j}({\bf Q}_u)}{\| {\bf q}_{u,i,j+1} \!-\! {\bf q}_{u,i,j} \| \!\cdot\! \Delta t_{i,j}}.
\end{aligned}    
\end{equation}

Hence, the total task time, consisting of all hovering durations and flight durations, must satisfy
\begin{equation}
    \sum\limits_{i = 1}^K t_i + \sum\limits_{i = 0}^K \sum\limits_{j = 0}^N \frac{\max \{ d_{i,j}({\bf Q}_S),\ d_{i,j}({\bf Q}_J) \}}{V} \le T. \label{con:task time}
\end{equation}

To provide a unified representation of UAV trajectories across flight segments, we introduce a normalized time variable. Specifically, the position of UAV-$u$ within the $(i,j)$-th flight segment is expressed as ${\bf q}_{u,i,j}(z) = {\bf q}_{u,i,j} + z \cdot ({\bf q}_{u,i,j+1} - {\bf q}_{u,i,j})$, where $z \in [0,1]$ denotes the normalized variable along the segment from ${\bf q}_{u,i,j}$ to ${\bf q}_{u,i,j+1}$. Thus, the collision avoidance constraint between UAV-$S$ and UAV-$J$ can be formulated as
\begin{equation}
    \left\| {\bf q}_{S,i,j}(z) - {\bf q}_{J,i,j}(z) \right\|^2 \ge d_{\min}^2, \forall i \in \mathcal{K},\ j \in \mathcal{N}, \label{con:collision2}
\end{equation}
which leads to an infinite number of constraints over continuous time, thereby significantly increasing the complexity of the optimization problem.

Since the optimal scheduling strategy divides the UAV trajectory into multiple continuous segments, each corresponding to a specific operational region and assigned to a single user \cite{Yuan2023}, we define the scheduling of the $k$-th GU at the $i$-th hovering point as $a_{i,k}$, while the scheduling during the flight segment from ${{\bf{q}}_{u,i,j}}$ to ${{\bf{q}}_{u,i,j+1}}$ is represented by $a_{i,j,k}$. Consequently, the communication scheduling set ${\bf a}$ consists of both hovering and flight scheduling variables for each GU, and satisfy the following modified constraints
\begin{align}
      &{\bf a} \in \{ 0,1\},\label{con:scheduling1_2}\\
      &\sum\limits_{k = 1}^K {{a_{i,k}} = 1},\sum\limits_{k = 1}^K {{a_{i,j,k}} = 1}.\label{con:scheduling2_2}   
\end{align}
With the variables reformed above, the throughput of the $k$-th GU over the entire flight can be rephrased as
\begin{align}
    &{U_k}({\bf{Q}}_u,{\bf{t}},{\bf{a}}) \nonumber \\
    &= \sum\limits_{i = 1}^K {{a_{i,k}}{t_i}R_k({\bf{q}}_{u,i})}   \\
    &\quad + \sum\limits_{i = 0}^K {\sum\limits_{j = 0}^N {a_{i,j,k}\Delta t_{i,j}\int_0^1{R_k({{\bf{q}}_{u,i,j}}(z))dz} }},\nonumber
\end{align}

It is worth noting that increasing the number of turning points leads to a more accurate approximation of the UAV flight paths, but also incurs higher computational complexity.  Therefore, selecting the number of turning points involves a trade-off between performance and computational complexity. 

After all, Problem (OP) is transformed into
\begin{subequations}\label{P2}
    \begin{alignat}{2}
    ({\rm P1}):&\max \limits_{{\bf{Q}}_u,{\bf t},{\bf a} } &&\min \limits_{k \in {\cal K}} U_k({\bf{Q}}_u,{\bf t},{\bf a})\\
       &\quad \mathrm{s.t.: } \quad &&  {\bf t} \ge 0, \label{con:duration}\\
       &&&\eqref{con:initial final point2},\eqref{con:task time},\eqref{con:collision2},\eqref{con:scheduling1_2},\eqref{con:scheduling2_2}.\nonumber
    \end{alignat}
\end{subequations} 
where the constraint \eqref{con:duration} ensures that the hovering duration is non-negative, while also allowing for the possibility of zero hovering duration. This is in accordance with Proposition~\ref{prop2}, which states that the number of co-hovering point pairs in the optimal trajectory is at most $K$, thus accommodating the case where fewer than $K$ co-hovering point pairs are involved without affecting the optimality of the solution. 
It can be observed that the constraint \eqref{con:scheduling1_2} is an integer constraint, which introduces discontinuity in the solution space. To address this, we relax the constraint \eqref{con:scheduling1_2} and obtain the problem (P2) as follows 
\begin{subequations}\label{P3}
    \begin{alignat}{2}
       ({\rm P2}):&\max \limits_{{\bf{Q}}_u,{\bf t},{\bf a} } &&\min \limits_{k \in {\cal K}} U_k({\bf{Q}}_u,{\bf t},{\bf a})\\
       &\quad \mathrm{s.t.: }\quad && 0\le {\bf a} \le 1, \label{con:scheduling1_3} \\
       &&&\eqref{con:initial final point2},\eqref{con:task time},\eqref{con:collision2},\eqref{con:scheduling2_2},\eqref{con:duration}.\nonumber
    \end{alignat}
\end{subequations}
Despite the dimensionality reduction of optimization variables enabled by the collaborative SHF structure, problem (P2) remains non-convex due to two main factors: the infinite number anti-collision constraints \eqref{con:collision2} arising from continuous-time trajectory during flight periods, and the non-concavity of the objective function $U_k({\bf{q}}_u(t),{a_k}(t))$. In the next section, we develop a tractable reformulation for the anti-collision constraint via conservative minimum-distance approximation, and propose a successive convex approximation algorithm to efficiently solve the reformed problem.

\section{Iterative Solution}
\label{sec:iterative}
In this section, we first reformulate the anti-collision constraint into a finite number of tractable constraints using a conservative minimum-distance approximation. Then, we construct concave lower bounds for both the distance terms and the secrecy throughput. Based on these approximations, we develop an efficient iterative solution to solve (P2).

\subsection{Convex Approximation for Anti-Collision Constraints}
For each pair of UAV flight segments indexed by $(i,j)$, constraint~\eqref{con:collision2} requires the squared distance $\|{\bf q}_{S,i,j}(z) - {\bf q}_{J,i,j}(z)\|^2$ to remain above the safety threshold $d^2_{\min}$ for all $z \in [0,1]$, which results in an infinite number of non-convex constraints over a continuous interval.

To tackle this issue, we first derive a convex lower-bound approximation for the distance expression. Noting that the squared norm $\|{\bf q}_{S,i,j}(z) - {\bf q}_{J,i,j}(z)\|^2$ is jointly convex with respect to both UAV trajectories, we apply first-order Taylor expansion at the local point $({\bf q}^{(r)}_{S,i,j}(z), {\bf q}^{(r)}_{J,i,j}(z))$, leading to the following inequality as
\begin{equation}
\begin{aligned}
& \|{\bf q}_{S,i,j}(z) - {\bf q}_{J,i,j}(z)\|^2 \\
&\ge -\|{\bf q}^{(r)}_{S,i,j}(z) - {\bf q}^{(r)}_{J,i,j}(z)\|^2 \\
&\quad + 2({\bf q}^{(r)}_{S,i,j}(z) - {\bf q}^{(r)}_{J,i,j}(z))^T ({\bf q}_{S,i,j}(z) - {\bf q}_{J,i,j}(z)) \\
&\triangleq D^{(r)}_{i,j}(z), \label{con:collision3}
\end{aligned}
\end{equation}

By substituting the original distance constraint with this linear lower bound, we obtain a convex approximation as
\begin{equation}
D^{(r)}_{i,j}(z) \ge d^2_{\min}. \label{con:collision4}
\end{equation}

However, this constraint still involves infinitely many values of $z \in [0,1]$. To further reduce complexity, we equivalently reformulate~\eqref{con:collision4} by enforcing its minimum over the interval to exceed the threshold and can obtain
\begin{equation}
\min_{z \in [0,1]} D^{(r)}_{i,j}(z) \ge d^2_{\min}. \label{con:collision5}
\end{equation}

Since $D^{(r)}_{i,j}(z)$ is a quadratic function over $z \in [0,1]$, its minimum can be efficiently obtained in closed form \cite{Wu2024}. Therefore, $\min_{z \in [0,1]} D^{(r)}_{i,j}(z)$ can be obtained as 
\begin{equation}
    \eta^{(r)}_{i,j} = \begin{cases}
D^{(r)}_{i,j}(\frac{-\theta_{i,j}}{2\zeta_{i,j}}), \zeta_{i,j}>0\& \frac{-\theta_{i,j}}{2\zeta_{i,j}} \in (0,1), \\
\min\{D^{(r)}_{i,j}(0),D^{(r)}_{i,j}(1)\}, \text{others} .
\end{cases}\label{collision-min}
\end{equation}
where to simplify the expression, we denote $\Delta {\bf q}_{i,j}={\bf q}_{S,i,j}-{\bf q}_{J,i,j}$ and $\theta_{i,j}$ and $\zeta_{i,j}$ are expressed as 
\begin{subequations}
\begin{align}
    \theta_{i,j} =& -(\Delta {\bf q}^{(r)}_{i,j+1}-\Delta {\bf q}^{(r)}_{i,j})^2 \nonumber \\
    &+ 2 (\Delta {\bf q}^{(r)}_{i,j+1}-\Delta {\bf q}^{(r)}_{i,j})(\Delta {\bf q}_{i,j+1}-\Delta {\bf q}_{i,j}), \\
    \zeta_{i,j} =& -2 \Delta {\bf q}^{(r)T}_{i,j} (\Delta {\bf q}^{(r)}_{i,j+1}-\Delta {\bf q}^{(r)}_{i,j}-\Delta {\bf q}_{i,j+1}+\Delta {\bf q}_{i,j}) \nonumber \\
    &+2 (\Delta {\bf q}^{(r)}_{i,j+1}-\Delta {\bf q}^{(r)}_{i,j})^T \Delta {\bf q}_{i,j}.     
\end{align}
\end{subequations}

As a result, the infinite set of convex constraints in~\eqref{con:collision4} can be equivalently replaced by the following form
\begin{equation}
\eta^{(r)}_{i,j} \ge d^2_{\min}, \label{con:collision6}
\end{equation}
which ensures that the minimum safe distance is guaranteed throughout each flight segment.

\subsection{Convex Approximation for Objective Function}
To obtain a concave approximation of $U_k({\bf{Q}}_u,{\bf t},{\bf a})$, we construct a lower-bound concave function $U_k^{(r)}({\bf{Q}}_u,{\bf t},{\bf a})$ which meets $\min \limits_{k \in {\mathcal K}} U_k({\bf{Q}}_u,{\bf t},{\bf a})\ge U_k^{(r)}({\bf{Q}}_u,{\bf t},{\bf a})$ and the equality holds at the local point $({\bf{Q}}_u^{(r)},{\bf{t}}^{(r)},{\bf{a}}^{(r)})$. 

\textit{1) Hovering Period:} Given the non-smooth operator ${\left[\cdot \right]^ + }$ makes the problem hard to handle, we first introduce a judgment matrix $J_{1,i,k}^{(r)}$ in each iteration and $J_{1,i,k}^{(r)}$ is defined~as 
\begin{equation}
    J_{1,i,k}^{(r)} = \begin{cases}
0, & \text{when  } R^{(r)}_{S,k}({\bf{q}}^{(r)}_{u,i})-R^{(r)}_{S,e}({\bf{q}}^{(r)}_{u,i}) < 0, \\
1, & \text{when  }  R^{(r)}_{S,k}({\bf{q}}^{(r)}_{u,i})-R^{(r)}_{S,e}({\bf{q}}^{(r)}_{u,i}) \ge 0.
\end{cases}
\end{equation}
which ensures $R_k^{(r)}({\bf{q}}^{(r)}_{u,i}) \ge J_{1,i,k}^{(r)}(R^{(r)}_{S,k}({\bf{q}}^{(r)}_{u,i})-R^{(r)}_{S,e}({\bf{q}}^{(r)}_{u,i})$. 

Let $d_k({\bf{q}}_{u,i})$ and $d_e({\bf{q}}_{u,i})$ represent the distances from UAV-$u$ to GU $k$ and Eve at the $i$-th co-hovering point pair, respectively. Therefore, the received SNRs at ground node $m\in\{k,e\}$ at the $i$-th co-hovering point pair can be expressed as
\begin{equation}
    {\gamma_{i,m}} =\frac{\beta_0P_S d^2_m({\bf{q}}_{J,i})}{\beta_0P_J d^2_m({\bf{q}}_{S,i})+\sigma_m^2 d^2_m({\bf{q}}_{S,i}) d^2_m({\bf{q}}_{J,i})}.
\end{equation}

As the functions $f(u)=-\log_2(1+u)$ and $g(v)=\log_2(1+v)$ hold convexity for $u$ and $\frac{1}{v}$ respectively, we have
\begin{equation}
\begin{aligned}
    & R_k({\bf{Q}}_u,{\bf t},{\bf a}) \\
    &\ge -J_{1,i,k}^{(r)}(\frac{A_{1,i,k}^{(r)}}{{\gamma_{i,k}}} + A_{2,i,k}^{(r)}{\gamma_{i,e}} )+ B_{1,i,k}^{(r)}J_{1,i,k}^{(r)},\label{approx:hov}   
\end{aligned}
\end{equation}
where $A_{1,i,k}^{(r)}=\frac{({\gamma^{(r)}_{i,k}})^2}{\ln2({\gamma^{(r)}_{i,k}}+1)}$, $A_{2,i,k}^{(r)}=\frac{1}{\ln2({\gamma^{(r)}_{i,e}}+1)}$, and $B_{1,i,k}^{(r)}=\frac{A_{1,i,k}^{(r)}}{{\gamma^{(r)}_{i,k}}} + A_{2,i,k}^{(r)}{\gamma^{(r)}_{i,e}}+U_k({\bf{q}}^{(r)}_{u,i})$ are positive constants. 

Since the expressions $\frac{1}{{\gamma_{i,k}}}$ and ${\gamma_{i,e}}$ involve multiplicative coupling among trajectory-related terms, we apply the arithmetic mean–geometric mean inequality to facilitate decoupling. Specifically, for $M$ non-negative variables $x_i$, the inequality $\sqrt[M]{{\prod\nolimits_{i = 1}^M {{x_i}} }} \le \frac{1}{M}\sum\nolimits_{i = 1}^M {{x_i}}$ holds, with equality if and only if all $x_i$ are equal. This allows us to decouple the terms and obtain
\begin{subequations}
\begin{align}
&\frac{1}{{\gamma_{i,k}}}\le \frac{P_J}{P_S}(\frac{F_{1,i,k}^{(r)}}{2 d^4_k({\bf{q}}_{J,i})}+\frac{d^4_k({\bf{q}}_{S,i})}{2F_{1,i,k}^{(r)}}),\\
&{\gamma_{i,e}} \le E_{1,i,k}^{(r)}(\frac{F_{2,i,k}^{(r)}}{2 d^4_e({\bf{q}}_{S,i})}+\frac{d^4_e({\bf{q}}_{J,i})}{2F_{2,i,k}^{(r)}})+\frac{E_{2,i,k}^{(r)}}{d^2_e({\bf{q}}_{S,i})},
\end{align}
\end{subequations}
where $F_{1,i,k}^{(r)}=d_k^2({\bf{q}}^{(r)}_{S,i}) d_k^2({\bf{q}}^{(r)}_{J,i})$, $F_{2,i,k}^{(r)} = d_e^2({\bf{q}}^{(r)}_{S,i}) d_e^2({\bf{q}}^{(r)}_{J,i})$, $E_{1,i,k}^{(r)} = \frac{\beta_0^2 P_S P_J}{\left( \beta_0 P_J + \sigma_e^2 d_e^2({\bf{q}}_{J,i}^{(r)}) \right)^2}$, and $E_{2,i,k}^{(r)} = -E_{1,i,k}^{(r)} d_e^2({\bf{q}}_{J,i}^{(r)}) + \frac{\beta_0 P_S d_e^2({\bf{q}}_{J,i}^{(r)})}{\beta_0 P_J + \sigma_e^2 d_e^2({\bf{q}}_{J,i}^{(r)})}$ are positive constants. 

Note that in the approximated expressions of $\frac{1}{\gamma_{i,k}}$ and $\gamma_{i,e}$, the terms $\frac{1}{d_k^4({\bf q}_{J,i})}$, $\frac{1}{d_e^4({\bf q}_{S,i})}$, and $\frac{1}{d_e^2({\bf q}_{S,i})}$ are still non-convex due to inverse distance terms. Since the function $f(x)=\frac{1}{x},x>0$ is monotonically decreasing and convex in $x$, we can construct convex upper bounds by applying concave lower bounds to $d_k^2({\bf q}_{J,i})$ and $d_e^2({\bf q}_{S,i})$, which are convex in their respective variables. Specifically, we have the following first-order Taylor approximations 
\begin{subequations}
\begin{align}
d_m^2({\bf q}_{J,i}) 
&\ge 2({\bf q}_{J,i}^{(r)} - {\bf w}_m)^T({\bf q}_{J,i} - {\bf w}_m) - \left\| {\bf q}_{J,i}^{(r)} - {\bf w}_m \right\|^2 \nonumber\\
   & \triangleq (d_m^{(r)}({\bf{q}}_{J,i}))^2, m\in\{k,e\},
\end{align}
\end{subequations}
with equality holding at ${\bf q}_{J,i} = {\bf q}_{J,i}^{(r)}$ and ${\bf q}_{S,i} = {\bf q}_{S,i}^{(r)}$, respectively. To ensure the convexity and tight upper bound of the above approximation, we impose the following linear constraints
\begin{equation}
    2({\bf q}_{J,i}^{(r)} - {\bf w}_m)^T({\bf q}_{J,i} - {\bf w}_m) - \left\| {\bf q}_{J,i}^{(r)} - {\bf w}_m \right\|^2 \geq 0.
\end{equation}

Combining the above approximations, the non-convex term $R_k({\bf{Q}}_u,{\bf t},{\bf a})$ is lower bounded as
\begin{equation}
\begin{aligned}
     &R_k({\bf{Q}}_u,{\bf t},{\bf a})\\
     &\ge -\Big(A_{1,i,k}^{(r)} \frac{\sigma_k^2 d_k^2({\bf q}_{S,i})}{\beta_0 P_S}\!+\! \frac{P_J A_{1,i,k}^{(r)}}{P_S}\big(\frac{F_{1,i,k}^{(r)}}{2(d_k^{(r)}({\bf{q}}_{J,i}))^4}\\
     &\quad+\!\frac{d^4_k({\bf{q}}_{S,i})}{2F_{1,i,k}^{(r)}}\big)\!+\!A_{2,i,k}^{(r)}E_{1,i,k}^{(r)}\big(\frac{F_{2,i,k}^{(r)}}{2(d_e^{(r)}({\bf{q}}_{S,i}))^4} \!+\!\frac{d^4_e({\bf{q}}_{S,i})}{2F_{2,i,k}^{(r)}}\big)\\
     &\quad+\! \frac{A_{2,i,k}^{(r)}E_{2,i,k}^{(r)}}{(d_e^{(r)}({\bf{q}}_{S,i}))^2}\Big)+\! B_{1,i,k}^{(r)}J_{1,i,k}^{(r)}\\
     &\triangleq -h^{(r)}_{i,k}({\bf{q}}_{u,i})J_{1,i,k}^{(r)}+ B_{1,i,k}^{(r)}J_{1,i,k}^{(r)},
\end{aligned}
\end{equation}
where, for notational convenience, $h^{(r)}_{i,k}({\bf q}_{u,i})$ aggregates all the terms that depend on the UAV trajectory variables ${\bf q}_{S,i}$ and ${\bf q}_{J,i}$, which are considered mutually independent. Subsequently, to further decouple the product of variables $a_{i,k}$, $t_i$, and $h^{(r)}_{i,k}({\bf q}_{u,i})$, we apply the arithmetic mean inequality and obtain the concave function as 
\begin{equation}
\begin{aligned}
    & {a_{i,k}}{t_i}R_k({\bf{Q}}_u,{\bf t},{\bf a}) \\
    &\ge \frac{-J_{1,i,k}^{(r)}}{27C_{1,i,k}^{(r)}C_{2,i,k}^{(r)}}\big({a_{i,k}}+C_{1,i,k}^{(r)}{t_i}+C_{2,i,k}^{(r)}h^{(r)}_{i,k}({\bf{q}}_{u,i})\big)^3\\
    &\quad-\frac{B_{1,i,k}^{(r)}J_{1,i,k}^{(r)}}{2F_{3,i,k}^{(r)}}(1-{a_{i,k}})^2-\frac{B_{1,i,k}^{(r)}J_{1,i,k}^{(r)}F_{3,i,k}^{(r)}}{2}{t^2_i}\\
    &\quad+ B_{1,i,k}^{(r)}J_{1,i,k}^{(r)}{t_i}\\
    &\triangleq s^{(r)}_{i,k}({\bf{Q}}_{u},{\bf t},{\bf a}), \label{app_hov}
\end{aligned}
\end{equation}
where $C_{1,i,k}^{(r)}=\frac{a_{i,k}}{t_i}$, $C_{2,i,k}^{(r)}=\frac{a_{i,k}}{h^{(r)}_{i,k}({\bf{q}}_{u,i})}$ and $F_{3,i,k}^{(r)}=\frac{1-{a_{i,k}}}{t_i}$ are positive constants. The equality holds when $({\bf{Q}}_{u},{\bf t},{\bf a})=({\bf{Q}}^{(r)}_{u},{\bf t}^{(r)},{\bf a}^{(r)})$.

\textit{2) Flying Period:} The upper limit of the integration over the flight segment, i.e., the flight time, involves the term $\Delta t_{i,j}=\frac{\max \{d_{i,j}({\bf Q}_S),d_{i,j}({\bf Q}_J)\}}{V}$. To avoid the non-smoothness caused by $\max\{\cdot\}$ operator in the flight time expression, we adopt an equivalent linear reformulation based on a binary indicator $\lambda \in \{0,1\}$, and define $d_{\text{max}}({\bf Q}_{u,i,j}) = (1 - \lambda)\, d_{i,j}({\bf Q}_S) + \lambda\, d_{i,j}({\bf Q}_J)$, where $\lambda$ is determined as
\begin{equation}
\lambda =
\begin{cases}
0, & \text{if } d_{i,j}({\bf Q}_S) \ge d_{i,j}({\bf Q}_J), \\
1, & \text{otherwise}.
\end{cases}
\end{equation}
As a result, the flight time can be expressed as $\Delta t_{i,j} = \frac{d_{\text{max}}({\bf Q}_{u,i,j})}{V}$.

Similar to the hovering period, we introduce another judgment matrix $J_{1,i,j,k}^{(r)}$ to address non-smooth operator ${\left[\cdot \right]^ + }$, which is expressed as 
\begin{equation}
    J_{1,i,j,k}^{(r)} = \begin{cases}
0, & \text{when  } R^{(r)}_{S,k}({{\bf{q}}^{(r)}_{u,i,j}}(z))\!-\!R^{(r)}_{S,e}({{\bf{q}}^{(r)}_{u,i,j}}(z)) \!< \!0 ,\\
1, & \text{when  } R^{(r)}_{S,k}({{\bf{q}}^{(r)}_{u,i,j}}(z))\!-\!R^{(r)}_{S,e}({{\bf{q}}^{(r)}_{u,i,j}}(z))\! \ge \!0.
\end{cases}
\end{equation}

The received SNRs at ground node $m\in\{k,e\}$ during the flying period can be expressed as
\begin{equation}
    {\gamma_{i,j,m}}(z)\!=\!\frac{\beta_0P_S d^2_m({{\bf{q}}_{J,i,j}}(z))}{\beta_0P_J d^2_m({{\bf{q}}_{S,i,j}}(z))\!+\!\sigma_k^2 d^2_m({{\bf{q}}_{S,i,j}}(z)) d^2_m({{\bf{q}}_{J,i,j}}(z))}.
\end{equation}

Then, noting that the function $f(u)=-\log_2(1+u)-\sqrt{V(u)}$ is convex to $u$, and $g(v)=\log_2(1+v)$ is convex to $\frac{1}{v}$, we can derive 
\begin{equation}
    \begin{aligned}
        &\int_0^1 {R_k({{\bf{q}}_{u,i,j}}(z))d{z}} \\
        &\ge \int_0^1
        \Big(-\frac{A_{1,i,j,k}^{(r)}}{\gamma_{i,j,k}(z)}-A_{2,i,j,k}^{(r)}{\gamma_{i,j,e}(z)} + B_{1,i,j,k}^{(r)}\Big)J_{1,i,j,k}^{(r)}dz, \label{approxi_fly}
    \end{aligned}
\end{equation}
where $A_{1,i,j,k}^{(r)}=\frac{({\gamma^{(r)}_{i,j,k}}(z))^2}{\ln2({\gamma^{(r)}_{i,j,k}(z)}+1)}$, $A_{2,i,j,k}^{(r)}=\frac{1}{\ln2({\gamma^{(r)}_{i,j,e}(z)}+1)}$, and $B_{1,i,j,k}^{(r)}=\frac{A_{1,i,j,k}^{(r)}}{\gamma^{(r)}_{i,j,k}(z)} + A_{2,i,j,k}^{(r)}{\gamma^{(r)}_{i,j,e}(z)}+R^{(r)}_k({{\bf{q}}_{u,i,j}}(z))$ are positive constants. 

Similar to the approximation at hovering points, we apply the arithmetic mean–geometric mean inequality to decouple the multiplicative terms in the SNR expressions as follows
\begin{subequations}
\begin{align}
&\frac{1}{{\gamma_{i,j,k}(z)}}\le \frac{P_J}{P_S}(\frac{F_{1,i,k}^{(r)}}{2 d^4_k({{\bf{q}}_{J,i,j}}(z))}+\frac{d^4_k({{\bf{q}}_{S,i,j}}(z))}{2F_{1,i,k}^{(r)}}),\\
&{\gamma_{i,j,e}(z)} \le E_{1,i,k}^{(r)}(\!\frac{F_{2,i,k}^{(r)}}{2 d^4_e({{\bf{q}}_{S,i,j}}(\!{z}\!)\!)}\!\!+\!\!\frac{d^4_e({{\bf{q}}_{J,i,j}}(\!{z}\!)\!)}{2F_{2,i,k}^{(r)}}\!)\!+\!\!\frac{E_{2,i,k}^{(r)}}{d^2_e({{\bf{q}}_{S,i,j}}(\!{z}\!)\!)},
\end{align}
\end{subequations}
where $F_{1,i,j,k}^{(r)}\!\!=\!d_k^2({{\bf{q}}^{(r)}_{S,i,j}}(z)) d_k^2({{\bf{q}}^{(r)}_{J,i,j}}(z))$, $F_{2,i,j,k}^{(r)}\!\! =\! d_e^2({{\bf{q}}^{(r)}_{S,i,j}}(z)) d_e^2({{\bf{q}}^{(r)}_{J,i,j}}(z))$, $E_{1,i,j,k}^{(r)} \!\! =\! \frac{\beta_0^2 P_S P_J}{\left( \beta_0 P_J + \sigma_e^2 d_e^2({\bf{q}}_{J,i,j}^{(r)}(z)) \right)^2}$, and $E_{2,i,j,k}^{(r)} \!\! =\!\! -\!E_{1,i,j,k}^{(r)} d_e^2({\bf{q}}_{J,i,j}^{(r)}(z)) \!+\! \frac{\beta_0 P_S d_e^2({\bf{q}}_{J,i,j}^{(r)}(z))}{\beta_0 P_J + \sigma_e^2 d_e^2({\bf{q}}_{J,i,j}^{(r)}(z))}$ are positive constants. 

By applying the arithmetic mean inequality, the function $\int_0^1 {R_k({{\bf{q}}_{u,i,j}}(z))d{z}}$ can be further approximated in \eqref{app-fly}, where to ensure convexity, we approximate the non-convex terms $\frac{1}{d_k^2({\bf q}_{S,i,j}(z))}$ and $\frac{1}{d_e^2({\bf q}_{S,i,j}(z))}$ as follows
\begin{align}
    &{d^2_m({\bf{q}}_{J,i,j}(z))}  \nonumber\\
    &\ge  2({\bf{q}}^{(r)}_{J,i,j}(z)\!-\!{{\bf{w}}_m})^T({\bf{q}}_{J,i,j}(z)\!-\!{{\bf{w}}_m})\!-\!{\left\| {{\bf{q}}^{(r)}_{J,i,j}(z)-\! {{\bf{w}}_m}} \right\|}^2 \nonumber\\
    &\triangleq (d_m^{(r)}({\bf{q}}^{(r)}_{J,i,j}(z)))^2, m\in\{k,e\},
\end{align}
where the corresponding linear constraints are also introduced as follows
\begin{equation}
    2({\bf{q}}^{(r)}_{J,i,j}(z)\!-\!{{\bf{w}}_m})^T({\bf{q}}_{J,i,j}(z)\!-\!{{\bf{w}}_m})\!-\!\!{\left\| {{\bf{q}}^{(r)}_{J,i,j}(z)\!- \!\!{{\bf{w}}_m}} \right\|}^2
     \!\!\!\geq\!\! 0.
\end{equation}

Since the constraints must hold for all $z \in [0,1]$, we adopt the same method used in the anti-collision constraint in \eqref{collision-min} by introducing $\min_{z \in [0,1]}(d_k^{(r)}({\bf{q}}^{(r)}_{J,i,j}(z)))^2 \triangleq  \kappa^{(r)}_{J,k,i,j}\ge0$ and $\min_{z \in [0,1]}(d_e^{(r)}({\bf{q}}^{(r)}_{S,i,j}(z)))^2\triangleq  \kappa^{(r)}_{S,e,i,j}\ge0$.

\begin{figure*}[t!]
\begin{align}
&\int_0^1 {R_k({{\bf{q}}_{u,i,j}}(z))d{z}} \nonumber\\
     &\ge \int_0^1 J_{1,i,j,k}^{(r)}\Big(-A_{1,i,j,k}^{(r)} \frac{\sigma_k^2 d_k^2({{\bf{q}}_{S,i,j}}(z))}{\beta_0 P_S}- \frac{P_J A_{1,i,j,k}^{(r)}}{P_S}\big(\frac{F_{1,i,j,k}^{(r)}}{2(d^{(r)}_k({{\bf{q}}_{S,i,j}}(z)))^4}+\frac{d^4_k({{\bf{q}}_{S,i,j}}(z))}{2F_{1,i,j,k}^{(r)}}\big) \nonumber\\
     &\quad -A_{2,i,j,k}^{(r)}E_{1,i,j,k}^{(r)}\big(\frac{F_{2,i,j,k}^{(r)}}{2(d^{(r)}_e({{\bf{q}}_{S,i,j}}(z)))^4}+\frac{d^4_e({{\bf{q}}_{S,i,j}}(z))}{2F_{2,i,j,k}^{(r)}}\big) -\frac{A_{2,i,j,k}^{(r)}E_{2,i,j,k}^{(r)}}{(d^{(r)}_e({{\bf{q}}_{S,i,j}}(z)))^2} + B_{1,i,j,k}^{(r)} \Big) dz \nonumber\\
     &\triangleq \int_0^1 -h^{(r)}_{i,j,k}({\bf{q}}_{u,i,j}(z))+ B_{1,i,j,k}^{(r)}J_{1,i,j,k}^{(r)} dz,\label{app-fly}
\end{align}
\hrulefill
\end{figure*}

Next, by applying the arithmetic mean inequality, we can further decompose the terms and derive the concave function as follows
\begin{equation}
\begin{aligned}
    &{a_{i,j,k}\frac{d_{\text{max}}({\bf Q}_{u,i,j})}{V}}\int_0^1 R_k({{\bf{q}}_{u,i,j}}(z))d{z}\\
    &\ge \frac{-1}{27VC_{1,i,j,k}^{(r)}C_{2,i,j,k}^{(r)}}\Big({a_{i,j,k}}+C_{1,i,j,k}^{(r)}d_{\text{max}}({\bf Q}_{u,i,j})\\
    &\quad +C_{2,i,j,k}^{(r)}\!\int_0^1 \!\!\! h^{(r)}_{i,j,k}({\bf{q}}_{u,i,j}(z))dz\Big)^3\!\!-\!\frac{B_{2,i,j,k}^{(r)}}{2F_{3,i,j,k}^{(r)}}(1\!-{a_{i,j,k}})^2\\
    &\quad-\frac{B_{2,i,j,k}^{(r)}F_{3,i,j,k}^{(r)}}{2}d^2_{i,j,max}+ B_{2,i,j,k}^{(r)}d^{(r)}_{\text{max}}({\bf Q}_{u,i,j})\\
    &\triangleq v^{(r)}_{i,j,k}({\bf{Q}}_{u},{\bf a}), \label{app_fly}
\end{aligned}
\end{equation}
where $B_{2,i,j,k}^{(r)}=\frac{1}{V}\int^1_0 B_{1,i,j,k}^{(r)}J_{1,i,j,k}^{(r)}dz$, $C_{1,i,j,k}^{(r)}=\frac{a_{i,j,k}}{\big(d_{\text{max}}({\bf Q}_{u,i,j})\big)^{(r)}}$, $C_{2,i,j,k}^{(r)}=\frac{a_{i,j,k}}{\int_0^1 h^{(r)}_{i,j,k}({\bf{q}}_{u,i,j}(z))dz}$ and $F_{3,i,j,k}^{(r)}=\frac{1-{a_{i,j,k}}}{\big(d_{\text{max}}({\bf Q}_{u,i,j})\big)^{(r)}}$ are positive constants. The equality holds when $({\bf{Q}}_{u},{\bf a})=({\bf{Q}}^{(r)}_{u},{\bf a}^{(r)})$. 

Additionally, as $d_{\text{max}}({\bf Q}_{u,i,j}) = (1 - \lambda)\, d_{i,j}({\bf Q}_S) + \lambda\, d_{i,j}({\bf Q}_J)$ and ${d_{i,j}}({\bf{Q}}_u)$ is convex in ${\bf{q}}_{u,i,j}$ and ${\bf{q}}_{u,i,j+1}$, it can be approximated as
\begin{equation}
    \begin{aligned}
        {d_{i,j}}({\bf{Q}}_u) 
        \ge & \frac{1}{{{d_{i,j}}({{\bf{Q}}_u^{(r)}})}}(x_{u,i,j + 1}^{(r)}\! -\! x_{u,i,j}^{(r)})({x_{u,i,j + 1}} \!-\! {x_{u,i,j}}) \\
        & + \!\frac{1}{{{d_{i,j}}({{\bf{Q}}_u^{(r)}})}}(y_{u,i,j + 1}^{(r)} \!-\! y_{u,i,j}^{(r)})({y_{u,i,j + 1}} \!-\! {y_{u,i,j}})\\
 \triangleq& d_{i,j}^{(r)}({\bf{Q}}_u),
    \end{aligned}
\end{equation}
where the equality holds when ${\bf{Q}}_u={\bf{Q}}^{(r)}_u$. Thus, we can obtain $d^{(r)}_{\text{max}}({\bf Q}_{u,i,j})\triangleq(1-\lambda) d_{i,j}^{(r)}({\bf Q}_S)+\lambda d_{i,j}^{(r)}({\bf Q}_J)$.

Combining the approximation of hovering period and flying period, i.e., (\ref{app_hov}) and (\ref{app_fly}), the concave approximation of the entire period is expressed as
\begin{equation}
    \begin{aligned}
        &{U_k}({\bf{Q}}_{u},{\bf t},{\bf a}) \\
        \ge& \sum\limits_{i = 1}^K  {s^{(r)}_{i,k}({\bf{Q}}_{u},{\bf t},{\bf a})}  + \sum\limits_{i = 0}^K \sum\limits_{j = 0}^N  {v^{(r)}_{i,j,k}({\bf{Q}}_{u},{\bf a})}\\
        \triangleq & U_k^{(r)}({\bf{Q}}_{u},{\bf t},{\bf a}).
    \end{aligned}
\end{equation}

Substituting $U_k({\bf{Q}}_{u},{\bf t},{\bf a})$ by $U_k^{(r)}({\bf{Q}}_{u},{\bf t},{\bf a})$, we can transform Problem (P3) into 
\begin{subequations}\label{P4}
    \begin{alignat}{2}
        ({\rm P3}):&\max \limits_{{\bf{Q}}_{u},{\bf t},{\bf a}} && U   \\
       &\quad\!\!\!\mathrm{s.t.:} && U_k^{(r)}({\bf{Q}}_{u},{\bf t},{\bf a}) \ge U, \\
     &&&\kappa^{(r)}_{J,k,i,j}\ge0,\kappa^{(r)}_{S,e,i,j}\ge0,\\
       &&&\eqref{con:initial final point2},\eqref{con:task time},\eqref{con:scheduling2_2},\eqref{con:duration},\eqref{con:scheduling1_3}, \eqref{con:collision6}.\nonumber
    \end{alignat}
\end{subequations}
Up to this point, we have constructed the convex problem (P3) that can be efficiently addressed utilizing convex optimization methods. Afterwards, we introduce an iterative algorithm to tackle this issue effectively.

\subsection{Iterative Algorithm}
As illustrated in Algorithm 1, we first initialize feasible values for the variables as $({\bf{Q}}^{(0)}_{u},{\bf t}^{(0)},{\bf a}^{(0)})$ and set the iteration index to $r=0$. For the initialization of the UAV-S trajectory, we solve the Traveling Salesman Problem (TSP) to determine the shortest path passing through all GUs. The hovering points are positioned directly above the GU locations, with $u$ turning points evenly distributed between adjacent hovering points, resulting in the initial trajectory ${\bf{Q}}^{(0)}_S$. Similarly, the UAV-J trajectory, ${\bf{Q}}^{(0)}_J$, is initialized as a straight line connecting its starting and ending points, where the trajectory points are evenly distributed along the line segment. Subsequently, the total flight time is calculated by summing the maximum travel times between UAV-S and UAV-J for each segment. The remaining task time is then evenly distributed among all GUs, leading to the initialization of the hovering durations ${\bf{t}}^{(0)}$. Additionally, the communication scheduling is initialized as ${\bf{a}}^{(0)} = \frac{1}{K}$, where each user's scheduling coefficient is equally distributed. Thus, a feasible initial point $({\bf{Q}}^{(0)}_{u},{\bf t}^{(0)},{\bf a}^{(0)})$ is obtained.

In the $r$-th iteration, the convex approximation $R_k^{(r)}({\bf{Q}}_{u},{\bf t},{\bf a})$ is constructed based on the current feasible point $({\bf{Q}}^{(r)}_{u},{\bf t}^{(r)},{\bf a}^{(r)})$, and the corresponding convex problem (P3) is formulated. Using the ellipsoid method, the optimal solution of (P3), denoted as $({\bf{Q}}^{(r+1)}_{u},{\bf t}^{(r+1)},{\bf a}^{(r+1)})$, is obtained. This solution is then used as the feasible point for the next iteration. The above process is repeated until the objective function converges. It is worth noting that each iteration involves $M_{var}=2NK^2+3K^2+6NK+6K+2N$ optimization variables and $M_{con}=6NK^2+8K^2+12NK+13K+6N+3$ constraints. Therefore, as stated in the analysis, the algorithm complexity is $\mathcal{O}(\varphi M_{var}^2(M_{var}^2+M_{var}M_{con}))$, where $\varphi$ denotes the total number of iterations, and $\varepsilon$ represents the convergence threshold.

\begin{algorithm}[!t]
\small
\caption{\bf{: Joint optimization of UAV trajectories and communication scheduling.}}
\begin{algorithmic}
\STATE \noindent{\bf{$\!\!\!\!\!\!$Initialization}} \\
\STATE   Set iteration index $r=0$ and initialize $({\bf{Q}}^{(0)}_{u},{\bf t}^{(0)},{\bf a}^{(0)})$. 

 \STATE \noindent{ \bf{$\!\!\!\!\!\!\!$Iteration}} \\

 \STATE \noindent{\bf{a)}} Establish convex approximation $U_k^{(r)}({\bf{Q}}_{u},{\bf t},{\bf a})$ according to $({\bf{Q}}^{(r)}_{u},{\bf t}^{(r)},{\bf a}^{(r)})$;\\

 \STATE \noindent{\bf{b)}} Address the convex problem (P3);\\

 \STATE \noindent{\bf{c)}} \noindent{\bf{If}} ~ the iteration difference of the objective is less than $\varepsilon$\\
 
 \STATE ~~~~~~~~ Acquire the optimal solution as $({\bf{Q}}^*_{u},{\bf t}^*,{\bf a}^*)$;\\
 
 \STATE ~~~~~~~~ stop iteration;\\

 \STATE ~~~ \noindent{\bf{Else}} \\

 \STATE ~~~~~~~~ $r=r+1$;\\
 
 \STATE ~~~~~~~~ $({\bf{Q}}^{(r+1)}_{u},{\bf t}^{(r+1)},{\bf a}^{(r+1)})=({\bf{Q}}^*_{u},{\bf t}^*,{\bf a}^*)$;\\
 
 \STATE ~~~~~~~~ return to \bf{a)}.

\end{algorithmic}
\label{algorithm1}
\end{algorithm}

\section{Numerical Results}
\label{sec:numerical result}
In this section, we present numerical results to evaluate the proposed design in terms of the minimum secrecy throughput among GUs. We consider a square operational area of 500 meters in width, where GUs and Eve are randomly distributed. The initial and final locations of UAVs are set to be ${\bf q}_{S,I} = [450,450]$, ${\bf q}_{S,F} = [450,50]$, ${\bf q}_{J,I} = [50,450]$, and ${\bf q}_{J,F} = [50,50]$. Unless otherwise stated, the simulation parameters are set as follows: $K=4$, $N=1$, $T = 150$s, $V = 10$m/s, $d_{min} = 3$m, $P_S = 10$mW, $P_J = 1$mW, $\beta_0 = -30$dB, $\sigma_k^2 = -80$dBm, and $\sigma_e^2 = -80$dBm.

To demonstrate the effectiveness of the proposed design, we compare it against the following benchmarks:
\begin{itemize} 
\item \textbf{Single-UAV SHF}: In the absence of UAV-J, the trajectory and communication scheduling of UAV-S are jointly optimized based on the SHF structure proposed in \cite{Yuan2021}.

\item \textbf{TD-SCP design}: The task time is divided into $N_0$ sufficiently small and equal-length time slots $\Delta=\frac{T}{N_0}$, ensuring that the locations of UAVs remain approximately unchanged within each time slot. 
\end{itemize}

\begin{figure}[h]
	\centering
    \vspace{-.2cm}
\includegraphics[width=7.5 cm, trim = 5 10 5 10]{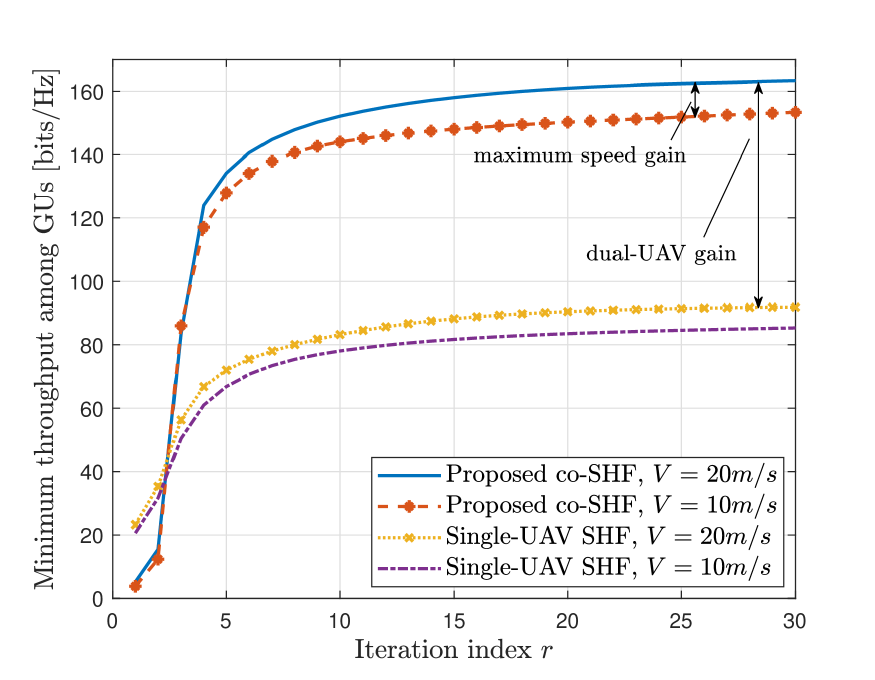}
\caption{Convergence of the proposed and Single-UAV designs under different UAV speed constraints.\label{convergence}}
\vspace{-.2cm}
\end{figure}
We first evaluate the convergence behavior of the proposed co-SHF design and the Single-UAV SHF design under different maximum speed constraints, as shown in Fig.~\ref{convergence}. All schemes exhibit rapid and stable convergence, with the minimum secrecy throughput increasing monotonically and stabilizing within around 15 iterations. For a fixed UAV speed, the proposed co-SHF solution achieves significantly higher minimum secrecy throughput than the Single-UAV SHF design, demonstrating the benefit of deploying a cooperative jammer UAV. Moreover, increasing the maximum UAV speed from 10 m/s to 20 m/s yields noticeable performance gains for both designs, as higher mobility allows more mission time to be allocated to hovering rather than flying, thereby improving both transmission and jamming effectiveness.

\begin{figure}[!t]
	\centering
    \includegraphics[width=7.5 cm, trim = 5 10 5 10]{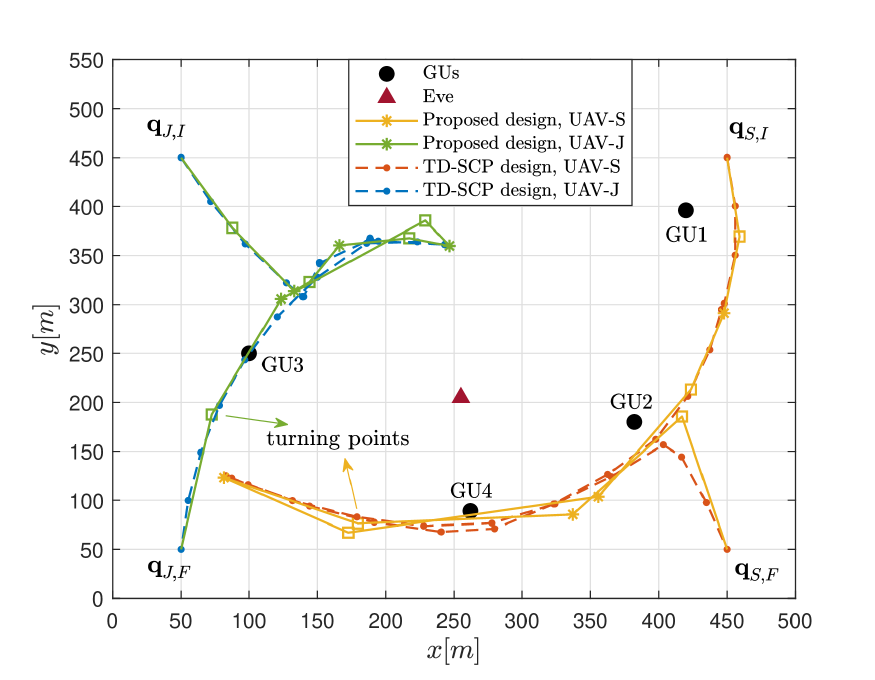}
\caption{UAV trajectories under the proposed and TD-SCP schemes.\label{traj}}
\end{figure}

We analyze the optimized trajectories of the two UAVs under different design schemes, as shown in Fig.~\ref{traj}. In both schemes, UAV-S hovers near the GUs to support communication, while UAV-J maneuvers around Eve to generate jamming signals. Notably, UAV-J maintains a significantly larger distance from Eve in both schemes, as Eve is located relatively close to several GUs and excessive proximity may result in unintended interference on legitimate links. Therefore, the UAV-J trajectory reflects a critical tradeoff between strengthening secrecy and preserving communication reliability. Moreover, the UAV-S trajectory in the proposed design consists of nearly straight flight segments with minimal directional changes, indicating that more mission time is allocated to effective hovering rather than to frequent maneuvering, thereby improving communication and jamming efficiency. 

\begin{figure}[!t]
	\centering
\includegraphics[width=7.5 cm, trim = 5 10 5 10]{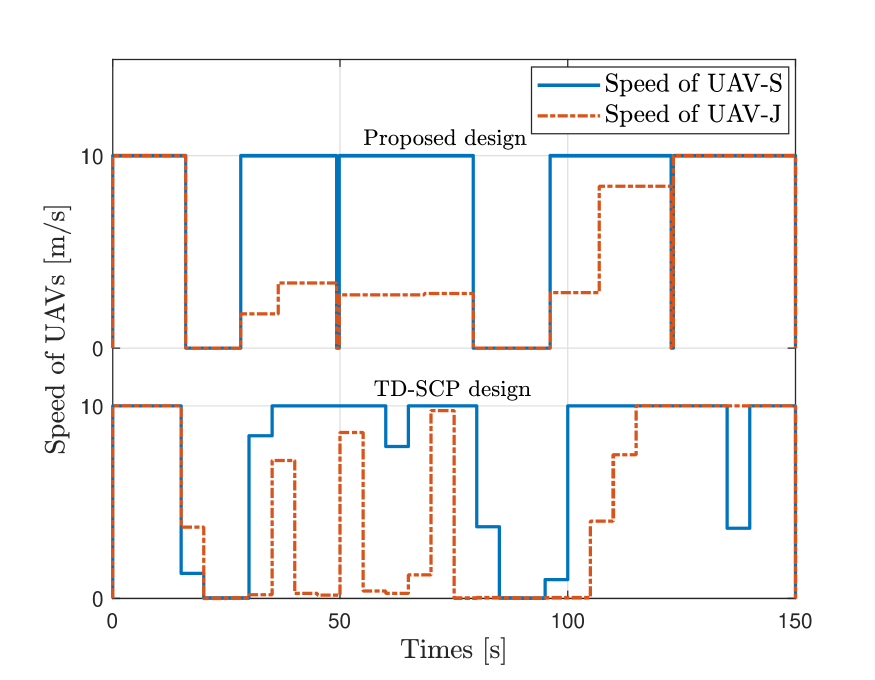}
\caption{Speed shifts of UAVs during the mission under the proposed and TD-SCP designs.\label{speed}}
\end{figure}

To gain further insight into the speed variation of the UAVs, Fig.~\ref{speed} presents their speed profiles during the mission under the proposed and TD-SCP designs. Both schemes exhibit alternating flight and hovering phases, with multiple time intervals showing simultaneous flight or simultaneous hovering. Notably, this characterization emerges in the TD-SCP scheme despite the absence of any explicit structural constraint, which further supports the validity of the proposed co-SHF structure. In the proposed design, at least one UAV consistently travels at maximum speed during flight phases, and in certain intervals, both UAVs fly at full speed simultaneously. This indicates a time-efficient movement strategy that reserves more mission time for critical hovering tasks such as data transmission and jamming. In contrast, the TD-SCP scheme includes multiple non-hovering intervals where neither UAV flies at maximum speed, suggesting inefficient time utilization and contributing to the performance gap. Moreover, the hovering durations at the second and fourth co-hovering point pairs in the proposed design are nearly zero, which is consistent with Proposition~\ref{prop2} that no more than K co-hovering pairs are required. 

\begin{figure}[!t]
	\centering
\includegraphics[width=7.5 cm, trim = 5 10 5 10]{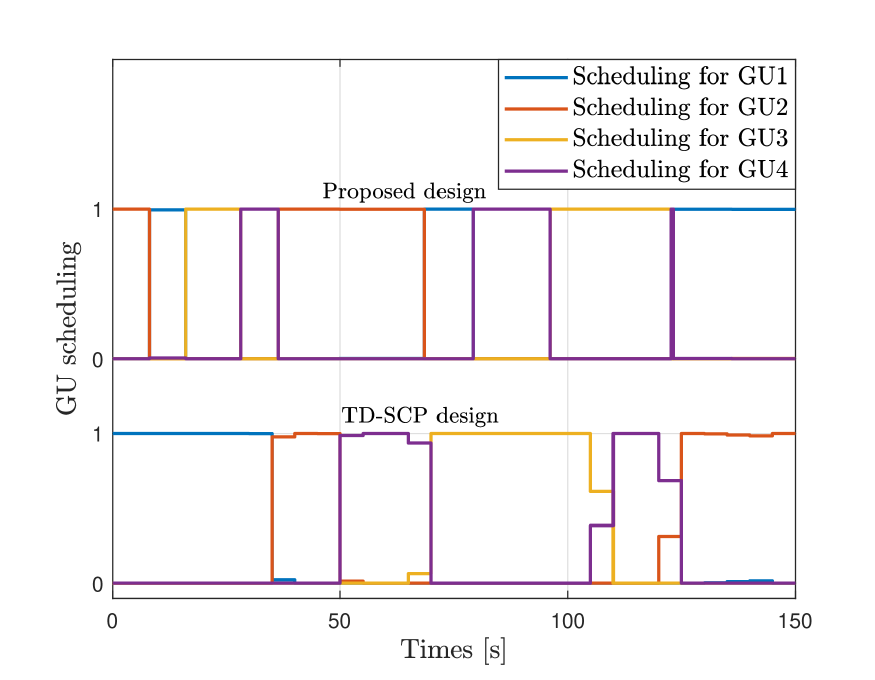}
\caption{Communication scheduling during the mission under the proposed and TD-SCP designs.\label{scheduling}}
\end{figure}
Beyond movement coordination, it is also essential to investigate how these schemes schedule users over time. Fig.~\ref{scheduling} presents the scheduling behavior of the proposed and TD-SCP designs during the mission. In the proposed design, the scheduling profile exhibits a binary nature, where exactly one GU is scheduled at each time instant. This strictly satisfies the binary scheduling constraint, reflecting the effectiveness of the proposed scheduling strategy. In contrast, the TD-SCP design displays several intervals where scheduling values deviate from 0 and 1, indicating partial or ambiguous user assignments. Such fractional scheduling does not strictly satisfy the binary scheduling constraint and may result in implementation discrepancies, as real systems require integer-valued assignments. Moreover, taken together with the trajectories shown in Fig.~\ref{traj}, it can be seen that both schemes tend to schedule GUs when UAV-S is geographically closer, which aligns with expectations. However, the proposed design adapts the scheduling durations for each GU more flexibly, aiming to maximize the minimum throughput. This leads to longer service durations for users with higher interference exposure, such as GU 2 and GU 4, and reflects a fairness-aware optimization. Conversely, the TD-SCP design allocates similar scheduling times to all GUs, which compromises fairness and can result in degraded system performance, especially for vulnerable users.

\begin{figure}[!t]
	\centering
\includegraphics[width=7.5 cm, trim = 5 10 5 10]{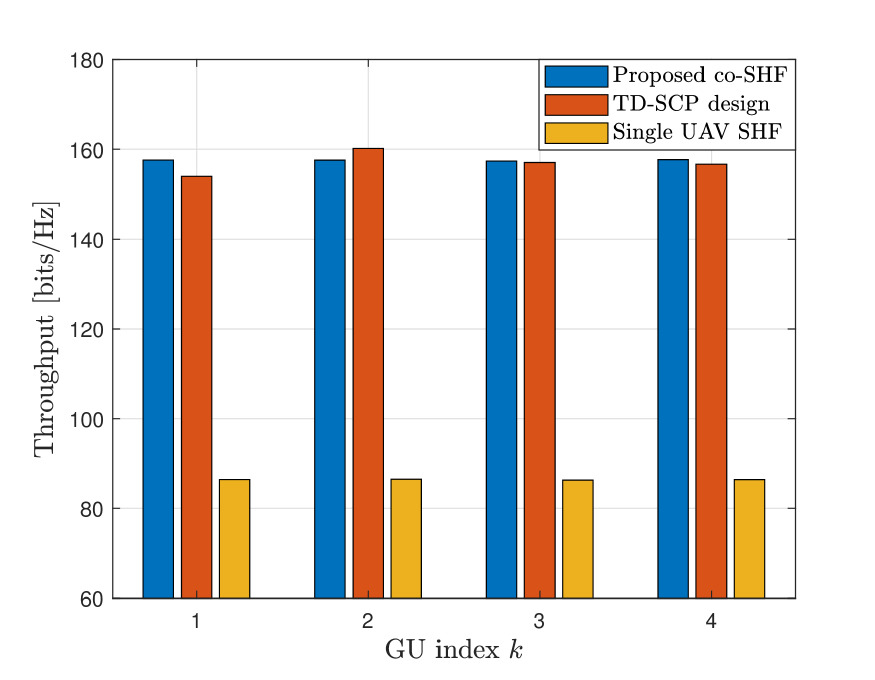}
\caption{Secrecy throughput of each GU under different schemes.\label{fair}}
\end{figure}
The scheduling behavior observed in Fig.~\ref{scheduling} has a direct impact on per-user secrecy throughput, as shown in Fig.~\ref{fair}. The proposed scheme ensures relatively uniform throughput among all GUs, indicating that no user is overly favored or neglected. This aligns with the optimization objective of maximizing the minimum throughput across users. In contrast, the TD-SCP scheme exhibits significant disparity in user throughput. Notably, GU~1 achieves markedly higher throughput than the others due to its favorable position farther from Eve, resulting in less interference and stronger channel quality. However, the TD-SCP scheme fails to compensate for such interference-related differences, as it allocates scheduling time uniformly among users regardless of their relative vulnerability. This behavior, as observed in Fig.~\ref{scheduling}, leads to a mismatch between user needs and scheduling resources, ultimately resulting in unfairness. Additionally, the Single-UAV design yields the lowest overall throughput, further highlighting the benefits of using a dual-UAV cooperative structure to enhance both communication performance and secrecy protection.

\begin{figure}[!t]
	\centering
    \includegraphics[width=7.5 cm, trim = 5 10 5 10]{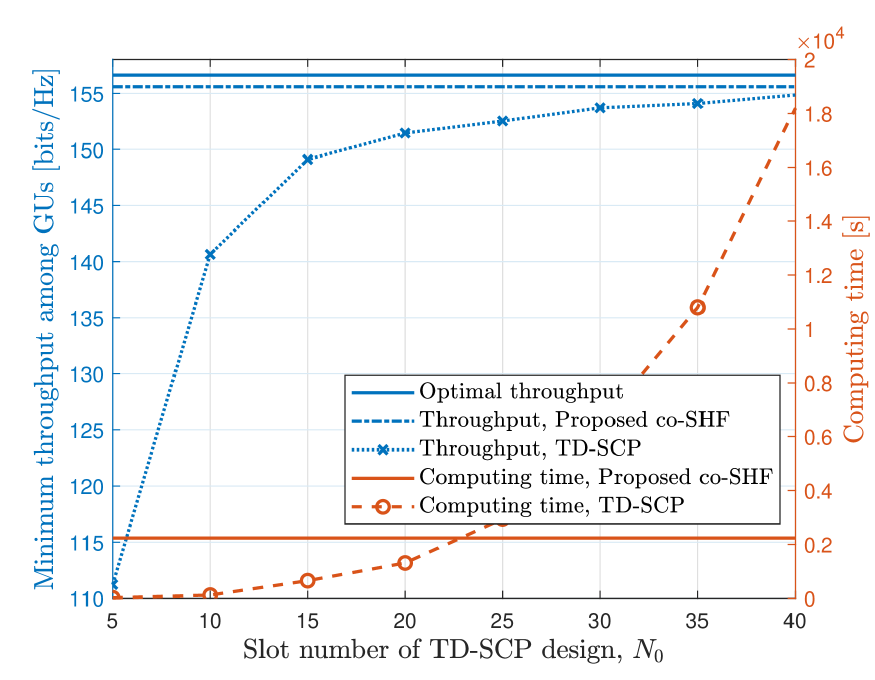}
\caption{Secrecy throughput and computing time versus slot number $N_0$ of the TD-SCP design, compared with the proposed design.\label{quantization}}
\end{figure}
Fig.~\ref{quantization} illustrates the impact of the slot number $N_0$ on secrecy performance and computing time of the TD-SCP design in comparison with the proposed method. As $N_0$ increases, TD-SCP gradually improves and approaches the curve of the proposed design.
However, our method remains ahead across a broad range of $N_0$. Around $N_0\approx 22$, the two methods exhibit comparable wall-clock time, yet TD-SCP still shows a clear gap in the minimum secrecy throughput. TD-SCP reaches a similar throughput only at about $N_0\approx 40$, where its wall-clock time is about \emph{nine times} that of our method. In terms of computational efficiency, the runtime of TD-SCP grows rapidly with $N_0$, whereas the proposed design with a single turning point $N=1$ maintains a much lower computing time since its optimization variables are tied to the number of co-hovering pairs, i.e., the number of GUs, rather than to slot density. Overall, the proposed design achieves comparable or better secrecy performance than high-resolution TD-SCP while requiring only a fraction of its runtime. It is worth mentioning that this runtime ratio reflects empirical wall-clock measurements in our simulations, rather than a theoretical complexity bound.

\begin{figure}[!t]
	\centering
\includegraphics[width=7.5 cm, trim = 5 10 5 10]{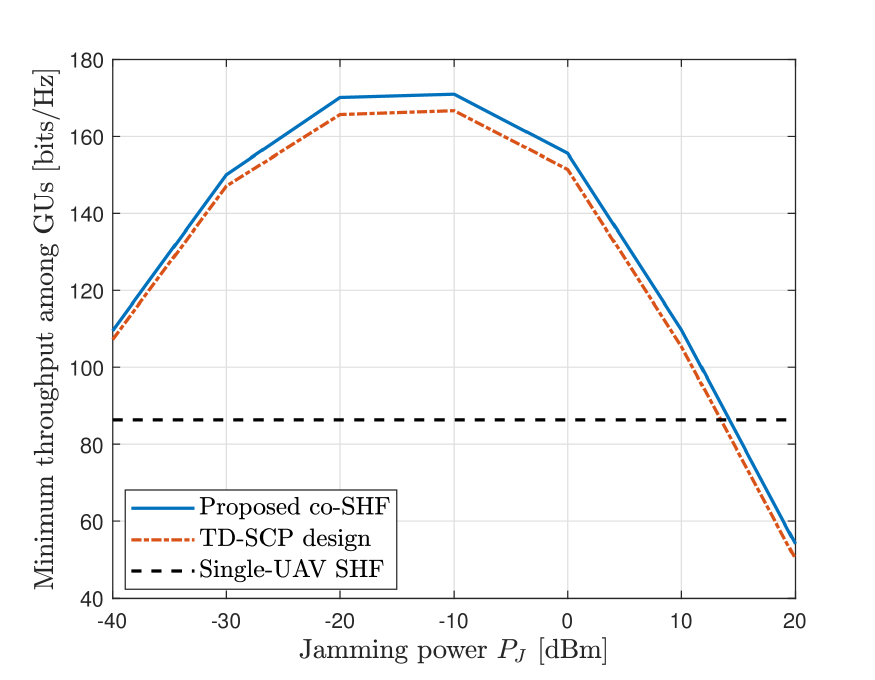}
\caption{Minimum secrecy throughput among GUs versus jamming power $P_J$ under different schemes.\label{power}}
\end{figure}


Fig.~\ref{power} shows the effect of jamming power $P_J$ on the minimum secrecy throughput under different schemes. The proposed design consistently outperforms both the TD-SCP and Single-UAV schemes across the entire range of $P_J$, demonstrating its ability to adapt to varying interference power levels. For all schemes, the throughput initially increases with $P_J$, as stronger jamming more effectively degrades the eavesdropper’s channel. However, beyond a certain threshold, further increases in jamming power begin to impair the legitimate communication link between UAV-S and the GUs, leading to a decline in the minimum secrecy throughput. These results underscore the importance of optimizing $P_J$ to balance secrecy enhancement and communication performance in cooperative UAV systems. 

\section{Conclusion}
\label{sec:conclusion}
This paper studied a dual-UAV secure communication system in which a data UAV serves multiple users while a cooperative jammer protects against a ground eavesdropper. We maximized the minimum secrecy throughput by jointly designing trajectories and communication scheduling under UAV mobility, anti-collision, and communication scheduling constraints. A central contribution of this work lies in proposing, for the first time, a structural characterization of the optimal dual-UAV trajectory design for the considered scenario. We rigorously proved that the UAVs must follow a collaborative SHF structure in which the two UAVs visit a finite set of synchronized co-hovering point pairs and, during each flight segment, at least one UAV moves at the maximum speed.  The structure enables an equivalent finite-dimensional reformulation with a limited number of variables. Building on this representation, we developed a successive convex approximation method that converges and runs with low computational burden. Simulation results verify that  the proposed design preserves or improves the minimum secrecy while achieving markedly shorter wall-clock time than time discretization. The effectiveness of coordinated trajectory design and the benefit of jammer UAV introduction are also demonstrated. 

It is worth noting that the proposed design exhibits strong flexibility and applicability to a wider range of multi-UAV communication systems with diverse mission profiles.  First, scaling from the dual-UAV case to systems with multiple cooperating agents, each with distinct yet interdependent roles, induces an exponential growth in trajectory and scheduling variables, under which the co-SHF structure demonstrates even greater advantages in preserving optimization efficiency. Second, the proposed trajectory structure is not confined to jamming-communication coordination. It is broadly applicable to a wider class of \textit{heterogeneous coordinated UAV} networks, where multiple UAVs perform distinct but temporally and spatially coupled tasks toward a common objective. Further directions include more complex air-to-ground propagation models, imperfect channel information, mobile users, and multiple eavesdroppers, all of which can be layered on the co-SHF foundation without changing the core methodology.

\appendices

\bibliography{mybibfile}






\end{document}